
\typeout{IJCAI--25 Instructions for Authors}


\documentclass{article}
\pdfpagewidth=8.5in
\pdfpageheight=11in

\usepackage{ijcai25}

\usepackage{csquotes}
\usepackage{times}
\usepackage{soul}
\usepackage{url}
\usepackage[hidelinks]{hyperref}
\usepackage[utf8]{inputenc}
\usepackage[small]{caption}
\usepackage{graphicx}
\usepackage{amsmath}
\usepackage{amsthm}
\usepackage{booktabs}
\usepackage{algorithm}
\usepackage{algorithmic}
\usepackage[switch]{lineno}
\usepackage{tikz}
\usepackage{comment}
\usepackage{subcaption}
\usepackage{amsmath}
\usepackage{amssymb}
\usepackage{todonotes}
\usepackage{pgfplots}
\usepackage{todonotes}
\usepackage{stmaryrd}
\usetikzlibrary{patterns,calc}

\urlstyle{same}



\newtheorem{theorem}{Theorem}
\newtheorem{problem}{Problem}
\newtheorem{lemma}{Lemma}





\pdfinfo{
/TemplateVersion (IJCAI.2025.0)
}


\title{Learning Probabilistic Temporal Logic Specifications for Stochastic Systems}


\author{
Rajarshi Roy$^1$
\and
Yash Pote$^2$\and
David Parker$^{1}$\And
Marta Kwiatkowska$^1$\\
\affiliations
$^1$University of Oxford, Oxford OX1 2JD, UK\\
$^2$National University of Singapore\\
\emails
\{rajarshi.roy, david.parker, marta.kwiatkowska\}@cs.ox.ac.uk,yashppote@gmail.com
}


\usepackage{stmaryrd}
\usepackage{pifont}
\usepackage{fontawesome}
\usepackage{amssymb}

\newcommand{\fanBr}[1]{\langle\!\langle#1\rangle\!\rangle\!}



\newcommand{\nat}{\mathbb{N}}

\newcommand{\LTL}{\textrm{LTL}}

\newcommand{\PLTL}{\textrm{PLTL}}
\newcommand{\PLTLB}{\textrm{PLTL}$^{+}$}

\newcommand{\agent}{\protect$\triangle$}
\newcommand{\decor}{\protect{\Large$\ast$}}
\newcommand{\office}{\protect{\huge$\circ$}}
\newcommand{\coffee}{\protect{\small\faCoffee}}


\newcommand{\markov}{M}
\newcommand{\PM}{\mathrm{Pr}}

\newcommand{\paths}{\Pi}

\newcommand{\prop}{\mathrm{AP}}

\newcommand{\literals}{\Lambda}

\DeclareMathOperator{\lF}{\mathbf{F}}
\DeclareMathOperator{\lG}{\mathbf{G}}
\DeclareMathOperator{\lU}{\mathbf{U}}
\DeclareMathOperator{\lX}{\mathbf{X}}

\DeclareMathOperator{\lfalse}{\mathit{false}}
\DeclareMathOperator{\ltrue}{\mathit{true}}

\DeclareMathOperator{\lP}{\mathbf{P}\!}


\newcommand{\sample}{\mathcal{S}}

\newcommand{\formulalist}{\mathcal{F}}
\newcommand{\discardlist}{\mathcal{D}}
\newcommand{\boollist}{\mathcal{B}}
\newcommand{\op}{\circ}
\newcommand{\size}[1]{|#1|}
\newcommand{\depth}[1]{d(#1)}
\newcommand{\automata}{\mathcal{A}}
\newcommand{\probvector}{V}
\newcommand{\score}{\sigma}

\newcommand{\knowA}{\mathtt{kA}}
\newcommand{\knowB}{\mathtt{kB}}

\newcommand{\tool}{\texttt{PriTL}}
\newcommand{\heap}{\mathcal{H}}

\begin{document}

\maketitle

\begin{abstract}
There has been substantial progress in the inference of
formal behavioural specifications from sample trajectories,
for example using Linear Temporal Logic (\LTL{}).
However, these techniques cannot handle specifications that correctly
characterise systems with stochastic behaviour,
which occur commonly in reinforcement learning and formal verification.
We consider the passive learning problem of inferring a Boolean combination
of probabilistic LTL (\PLTL{}) formulas from a set of Markov chains,
classified as either positive or negative.
We propose a novel learning algorithm that infers concise \PLTL{} specifications,
leveraging grammar-based enumeration, search heuristics, probabilistic model checking and Boolean set-cover procedures.
We demonstrate the effectiveness of our algorithm in two use cases: learning from policies induced by RL algorithms and learning from variants of a probabilistic model.
In both cases, our method automatically and efficiently extracts \PLTL{} specifications that succinctly characterize the temporal differences between the policies or model variants.

\end{abstract}


\section{Introduction}

Temporal logic is a powerful formalism used not only for writing correctness specifications in formal methods but also for defining non-Markovian goals and objectives in reinforcement learning (RL) and control tasks~\cite{DBLP:conf/iros/LiVB17,DBLP:conf/ijcai/CamachoIKVM19,DBLP:conf/cdc/HasanbeigKAKPL19,DBLP:conf/icra/Bozkurt0ZP20}.
Among temporal logics, Linear Temporal Logic (\LTL{})~\cite{DBLP:conf/focs/Pnueli77} is a de-facto standard for expressing temporal behaviours due to its widespread usage.
The popularity of \LTL{} stems from its desirable theoretical properties, such as efficient translation to automata and equivalence to first-order logic, as well as its interpretability, which arises from its resemblance to natural language.

Traditionally, specifications (whether temporal or not) have been manually constructed. 
This approach is error-prone, time-consuming, and requires a detailed understanding of the underlying system~\cite{BjornerH14,Rozier16}.
As a result, in recent years, there has been concentrated effort on automatically designing reliable and interpretable specifications in temporal logics.
A substantial body of research has centred on learning specifications in \LTL{}~\cite{flie,CamachoM19,scarlet} and its continuous-time extension Signal Temporal Logic (STL)~\cite{dtmethod,MohammadinejadD20}.

The primary setting of such learning frameworks is to infer specifications based on examples of trajectories generated from the underlying system.
While these frameworks are effective in learning specifications for deterministic systems, a similar approach will not be sufficient for systems with stochastic behaviour.
Specifications for these are inherently probabilistic,
for example asserting that the probability of some behaviour being observed exceeds a given threshold.
In these cases, it is not sufficient to infer a specification that characterises individual trajectories.

To accurately capture the behaviour of stochastic systems, we propose to learn temporal logic specifications from their formal models. As the logical specification formalism, we use \emph{probabilistic \LTL{}} (\PLTL{})~\cite{Var85}, which places thresholds on the probability of satisfaction of LTL formulas. The core models that we work with are discrete-time Markov chains (DTMCs).
These are commonly used for modelling in formal verification of stochastic systems.
In the context of RL, they capture the behaviour of an agent executing a specific learnt strategy (a.k.a, policy), in a probabilistic environment modelled as a Markov decision process (MDP).

We adopt the \emph{passive learning} framework~\cite{DBLP:journals/iandc/Gold78} and use a set of positive and negative DTMCs as input. 
The positive examples represent reliable probabilistic models or desirable strategies executing in stochastic environments, while the negative examples correspond to unreliable models or the behaviour of undesirable strategies. 
The learning task is to derive a concise \PLTL{} specification that captures the probabilistic temporal behaviour exhibited by the positive DTMCs while excluding the behaviour of the negative DTMCs.

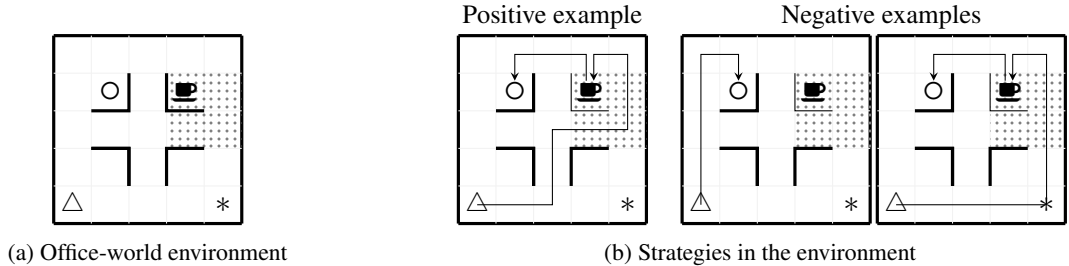
\begin{figure*}
    \centering
    \begin{subfigure}{0.25\textwidth}
        \centering
        \begin{tikzpicture}
        \def\gridsize{2.5}
        \def\gridperiod{0.5}
    
        \draw[very thick] (0, 0) -- (0, \gridsize);
        \draw[very thick] (0, 0) -- (\gridsize, 0);
        \draw[very thick] (0, \gridsize) -- (\gridsize, \gridsize);
        \draw[very thick] (\gridsize, 0) -- (\gridsize, \gridsize);
        
        \foreach \x in {0.5, 1, ..., 2.0} {
            \draw[color=gray!10] (\x, 0) -- (\x, \gridsize);
            \draw[color=gray!10] (0, \x) -- (\gridsize, \x);
        }
        \fill[pattern=dots, pattern color=gray] (1.5, 1) rectangle (2.5, 2);

        \draw[very thick] (\gridperiod, 2*\gridperiod) -- (2*\gridperiod, 2*\gridperiod) -- (2*\gridperiod, \gridperiod);
        \draw[very thick] (3*\gridperiod, \gridperiod) -- (3*\gridperiod, 2*\gridperiod) -- (4*\gridperiod, 2*\gridperiod);
        \draw[very thick] (\gridperiod, 2*\gridperiod) -- (2*\gridperiod, 2*\gridperiod) -- (2*\gridperiod, \gridperiod);
        \draw[very thick] (\gridperiod, 3*\gridperiod) -- (2*\gridperiod, 3*\gridperiod) -- (2*\gridperiod, 4*\gridperiod);
        \draw[very thick] (4*\gridperiod, 3*\gridperiod) -- (3*\gridperiod, 3*\gridperiod) --  (3*\gridperiod, 4*\gridperiod);
    
        \node at (0.5*\gridperiod, 0.5*\gridperiod) {\agent};
        \node at (3.5*\gridperiod, 3.5*\gridperiod) {\coffee};
        \node at (1.5*\gridperiod, 3.5*\gridperiod) {\office};
        \node at (4.5*\gridperiod, 0.5*\gridperiod) {\decor};
        \end{tikzpicture}
        \caption{Office-world environment}
    \end{subfigure}
    \hskip 0.5cm
    \begin{subfigure}{0.6\textwidth}
        \centering
        \begin{tikzpicture}
            \def\gridsize{2.5}
            \def\gridperiod{0.5}
        
            \draw[very thick] (0, 0) -- (0, \gridsize);
            \draw[very thick] (0, 0) -- (\gridsize, 0);
            \draw[very thick] (0, \gridsize) -- (\gridsize, \gridsize);
            \draw[very thick] (\gridsize, 0) -- (\gridsize, \gridsize);
            
        \foreach \x in {0.5, 1, ..., 2.0} {
            \draw[color=gray!10] (\x, 0) -- (\x, \gridsize);
            \draw[color=gray!10] (0, \x) -- (\gridsize, \x);
        }
        \fill[pattern=dots, pattern color=gray] (1.5, 1) rectangle (2.5, 2);
        
            \draw[very thick] (\gridperiod, 2*\gridperiod) -- (2*\gridperiod, 2*\gridperiod) -- (2*\gridperiod, \gridperiod);
            \draw[very thick] (3*\gridperiod, \gridperiod) -- (3*\gridperiod, 2*\gridperiod) -- (4*\gridperiod, 2*\gridperiod);
            \draw[very thick] (\gridperiod, 2*\gridperiod) -- (2*\gridperiod, 2*\gridperiod) -- (2*\gridperiod, \gridperiod);
            \draw[very thick] (\gridperiod, 3*\gridperiod) -- (2*\gridperiod, 3*\gridperiod) -- (2*\gridperiod, 4*\gridperiod);
            \draw (4*\gridperiod, 3*\gridperiod) -- (3*\gridperiod, 3*\gridperiod) --  (3*\gridperiod, 4*\gridperiod);
        
            \node at (0.5*\gridperiod, 0.5*\gridperiod) {\agent};
            \node at (3.5*\gridperiod, 3.5*\gridperiod) {\coffee};
            \node at (1.5*\gridperiod, 3.5*\gridperiod) {\office};
            \node at (4.5*\gridperiod, 0.5*\gridperiod) {\decor};
            
            \node at (2.5*\gridperiod, 5.5*\gridperiod) {Positive example};
            \draw[->, >=stealth, draw=black] 
            (0.5*\gridperiod, 0.5*\gridperiod) -- (2.5*\gridperiod, 0.5*\gridperiod) -- (2.5*\gridperiod, 2.5*\gridperiod) -- (4.5*\gridperiod, 2.5*\gridperiod) -- (4.5*\gridperiod, 4.5*\gridperiod) -- (3.6*\gridperiod, 4.5*\gridperiod) -- (3.6*\gridperiod, 3.8*\gridperiod);
            \draw[->, >=stealth, draw=black] 
            (3.4*\gridperiod, 3.8*\gridperiod) -- (3.4*\gridperiod, 4.5*\gridperiod) -- (1.5*\gridperiod, 4.5*\gridperiod) -- (1.5*\gridperiod, 3.8*\gridperiod);
        \end{tikzpicture}
        \hskip 0.3cm
        \begin{tikzpicture}
            \def\gridsize{2.5}
            \def\gridperiod{0.5}
        
            \draw[very thick] (0, 0) -- (0, \gridsize);
            \draw[very thick] (0, 0) -- (\gridsize, 0);
            \draw[very thick] (0, \gridsize) -- (\gridsize, \gridsize);
            \draw[very thick] (\gridsize, 0) -- (\gridsize, \gridsize);
            
        \foreach \x in {0.5, 1, ..., 2.0} {
            \draw[color=gray!10] (\x, 0) -- (\x, \gridsize);
            \draw[color=gray!10] (0, \x) -- (\gridsize, \x);
        }
        \fill[pattern=dots, pattern color=gray] (1.5, 1) rectangle (2.5, 2);
        
            \draw[very thick] (\gridperiod, 2*\gridperiod) -- (2*\gridperiod, 2*\gridperiod) -- (2*\gridperiod, \gridperiod);
            \draw[very thick] (3*\gridperiod, \gridperiod) -- (3*\gridperiod, 2*\gridperiod) -- (4*\gridperiod, 2*\gridperiod);
            \draw[very thick] (\gridperiod, 2*\gridperiod) -- (2*\gridperiod, 2*\gridperiod) -- (2*\gridperiod, \gridperiod);
            \draw[very thick] (\gridperiod, 3*\gridperiod) -- (2*\gridperiod, 3*\gridperiod) -- (2*\gridperiod, 4*\gridperiod);
            \draw (4*\gridperiod, 3*\gridperiod) -- (3*\gridperiod, 3*\gridperiod) --  (3*\gridperiod, 4*\gridperiod);
        
            \node at (5.3*\gridperiod, 5.5*\gridperiod) {Negative examples};
            \node at (0.5*\gridperiod, 0.5*\gridperiod) {\agent};
            \node at (3.5*\gridperiod, 3.5*\gridperiod) {\coffee};
            \node at (1.5*\gridperiod, 3.5*\gridperiod) {\office};
            \node at (4.5*\gridperiod, 0.5*\gridperiod) {\decor};
            
             \draw[->, >=stealth, draw=black] 
            (0.5*\gridperiod, 0.5*\gridperiod) -- (0.5*\gridperiod, 4.5*\gridperiod) -- (1.5*\gridperiod, 4.5*\gridperiod) -- (1.5*\gridperiod, 3.8*\gridperiod);
        \end{tikzpicture}
        \hspace{-1.7cm}
        \begin{tikzpicture}
            \def\gridsize{2.5}
            \def\gridperiod{0.5}
        
            \draw[very thick] (0, 0) -- (0, \gridsize);
            \draw[very thick] (0, 0) -- (\gridsize, 0);
            \draw[very thick] (0, \gridsize) -- (\gridsize, \gridsize);
            \draw[very thick] (\gridsize, 0) -- (\gridsize, \gridsize);
            
        \foreach \x in {0.5, 1, ..., 2.0} {
            \draw[color=gray!10] (\x, 0) -- (\x, \gridsize);
            \draw[color=gray!10] (0, \x) -- (\gridsize, \x);
        }
        \fill[pattern=dots, pattern color=gray] (1.5, 1) rectangle (2.5, 2);
        
            \draw[very thick] (\gridperiod, 2*\gridperiod) -- (2*\gridperiod, 2*\gridperiod) -- (2*\gridperiod, \gridperiod);
            \draw[very thick] (3*\gridperiod, \gridperiod) -- (3*\gridperiod, 2*\gridperiod) -- (4*\gridperiod, 2*\gridperiod);
            \draw[very thick] (\gridperiod, 2*\gridperiod) -- (2*\gridperiod, 2*\gridperiod) -- (2*\gridperiod, \gridperiod);
            \draw[very thick] (\gridperiod, 3*\gridperiod) -- (2*\gridperiod, 3*\gridperiod) -- (2*\gridperiod, 4*\gridperiod);
            \draw (4*\gridperiod, 3*\gridperiod) -- (3*\gridperiod, 3*\gridperiod) --  (3*\gridperiod, 4*\gridperiod);
        
            \node at (0.5*\gridperiod, 0.5*\gridperiod) {\agent};
            \node at (3.5*\gridperiod, 3.5*\gridperiod) {\coffee};
            \node at (1.5*\gridperiod, 3.5*\gridperiod) {\office};
            \node at (4.5*\gridperiod, 0.5*\gridperiod) {\decor};
            
            \draw[->, >=stealth, draw=black] 
            (0.5*\gridperiod, 0.5*\gridperiod) -- (4.5*\gridperiod, 0.5*\gridperiod) -- (4.5*\gridperiod, 4.5*\gridperiod) -- (3.6*\gridperiod, 4.5*\gridperiod) -- (3.6*\gridperiod, 3.8*\gridperiod);
            \draw[->, >=stealth, draw=black] 
            (3.4*\gridperiod, 3.8*\gridperiod) -- (3.4*\gridperiod, 4.5*\gridperiod) -- (1.5*\gridperiod, 4.5*\gridperiod) -- (1.5*\gridperiod, 3.8*\gridperiod);
        \end{tikzpicture}
        \caption{Strategies in the environment}
        \end{subfigure}

    \caption{An illustration of an office-world environment with the following features: office \office{}, coffee \coffee{}, and decoration \decor{}. The shaded part near \coffee{} is slippery and introduces stochasticity in the agent's movement.
    The positive example demonstrates a strategy where the agent \agent{} collects \coffee{} and delivers to \office{} while avoiding \decor{}, whereas the negative examples do not achieve this temporal task.}
    \label{fig:office-world}
    \end{figure*}

To illustrate our problem, we consider a simple stochastic office world environment, adapted from~\cite{DBLP:conf/ijcai/CamachoIKVM19}, shown in Figure~\ref{fig:office-world}. 
The environment consists of three features: office \office{}, coffee \coffee{}, and decoration \decor{}. 
A slippery area (shaded) near the coffee station introduces randomness in the agent's movement.
In this environment, desirable and undesirable strategies correspond to positive and negative DTMCs, respectively.
Based on this input, a possible formula could be $\lP_{\ge 0.9}[\lF(\text{\coffee} \land \lF(\text{\office}))] \land \lP_{\ge 0.9}[\lG(\neg \text{\decor})]$, which is a conjunction of two \PLTL{} formulas.
This formula uses the temporal operators $\lF$ (\emph{eventually}) and $\lG$ (\emph{always}), along with the probabilistic quantifier $\lP_{\ge 0.9}$ (\emph{at least 0.9 probability}) to state that the agent has a high probability of getting coffee and delivering it to the office while avoiding the decoration.

To address the passive learning problem for \PLTL{}, we propose a novel symbolic search algorithm comprising three key procedures. The first procedure uses grammar-based enumeration to identify candidate \LTL{} formulas. The second procedure determines threshold values for probabilistic quantifiers through probabilistic model checking, which results in a \PLTL{} formula. Finally, the third procedure constructs Boolean combinations of \PLTL{} formulas using a generalization of the set-cover problem.

To improve learning, the algorithm incorporates several heuristics for pruning the search space, including \LTL{} simplification rules, inference techniques based on model checking, and tactics for Boolean combinations. Overall, the algorithm is designed with theoretical guarantees to learn concise and interpretable \PLTL{} formulas from DTMCs.


We implement our algorithm as a tool \tool{} to leverage the grammar-based search heuristic coupled with the state-of-the-art tool PRISM~\cite{DBLP:conf/cav/KwiatkowskaNP11} for probabilistic model checking of DTMCs.
We evaluate \tool{} through two case studies. 
In the first case study, we consider learning \PLTL{} formulas to distinguish between desirable and undesirable strategies obtained using RL for a variety of temporal tasks.
In the second case study, we consider distinguishing between different variants of a probabilistic protocol.
In both case studies, \tool{} effectively infers concise and descriptive \PLTL{} formulas that explain the probabilistic temporal behaviour of the systems.
Moreover, we demonstrate how the different procedures contribute to the learning process. Additional proofs, implementation details, and experimental results are provided in Appendix~\ref{app:proofs},~\ref{app:implementation}, and~\ref{app:experimental}, respectively.

\subsection{Related Work}
There are two main areas of related work: learning temporal logics from data and explaining strategies/policies in RL.

\paragraph{Learning Temporal Logics.}
There are numerous works on learning temporal logics from data, specifically focusing on LTL~\cite{DBLP:conf/birthday/Neider025} and STL~\cite{DBLP:journals/iandc/BartocciMNN22}. Our work falls within the category of \emph{exact learning}, which seeks to infer minimal formulas that perfectly fit the data with provable guarantees.
Notable works in this category include those for 
 LTL~\cite{flie,CamachoM19,scarlet,LTL-GPU}, STL~\cite{MohammadinejadD20}, and several other temporal logics such as PSL~\cite{DBLP:conf/ijcai/0002FN20}, CTL~\cite{DBLP:conf/ijcar/PommelletSS24} and ATL~\cite{DBLP:conf/fm/BordaisNR24}.
The learning techniques primarily involve deductive methods such as constraint solving and enumerative search.

There are also several works within the category of \emph{approximate learning}, which seeks to infer formulas that fit (typically noisy) data well.
Notable works in this category include those for LTL~\cite{DBLP:conf/formats/BartocciBS14,DBLP:conf/aaai/LuoLDWPZ22,DBLP:conf/aaai/WanLDLYP24,DBLP:conf/time/Chiariello24} and STL~\cite{NenziSBB18}.
The learning techniques involve statistical optimisation, genetic algorithms, and neural network inference.

Our work considers the exact learning of probabilistic LTL, which, to our knowledge, has not been explored before.
Moreover, we introduce a new learning framework based on symbolic search guided by dedicated model checkers.

\paragraph{Explaining RL policies} 
Several approaches exist for explaining policies in reinforcement learning~\cite{DBLP:journals/csur/MilaniTVF24}. Our work falls within the category of \emph{global explanations} of pre-trained policies using formal languages. Notable works in this category include providing contrastive explanation using restricted queries in PCTL\textsuperscript{*}~\cite{DBLP:conf/ijcai/BoggessK023}, extracting finite-state machines from neural policies \cite{DBLP:conf/icml/DaneshKFK21}, and summarizing using abstract policy graphs~\cite{DBLP:conf/aaai/TopinV19}.

In contrast, our approach explains the temporal difference between policies using the full expressive power of \PLTL{}. Such explanations can also be translated to natural language~\cite{DBLP:conf/aaai/FuggittiC23}.





\section{Preliminaries}

\noindent
Let $\nat  = \{0, 1, 2, \dots\}$ be the set of natural numbers.

\subsection{Markov Chains and MDPs}


A \emph{discrete-time Markov chain} (DTMC) is a tuple $\markov=(S,s_I,P,\prop,\ell)$, where $S$ is a finite set of states, $s_I\in S$ is an initial state, $P: S\times S\mapsto [0,1]$ is a probabilistic transition function,
$\prop$ is a set of atomic propositions and $\ell: S\mapsto 2^{\prop}$ is a labelling function.
Atomic propositions will form the basis for temporal logic specifications
and the function $l$ defines the propositions that are true in each state.

A \emph{path} $\pi$ of $\markov$ is an infinite sequence of states $\pi = s_0s_1s_2\ldots\in S^\omega$ such that $P(s_i,s_{i+1})>0$ for all $i\in \nat$.
We denote the state at position $i$ of a path $\pi$ by $\pi[i] = s_i$ and
the infinite suffix starting in $\pi[i]$ as $\pi[i:] = s_{i}s_{i+1}\dots$.
The set of all paths of $\markov$ starting from state $s$ is written as $\paths^{\markov}(s)$.
In standard fashion~\cite{KSK76}, we define a probability measure
$\PM^{\markov}_s$ on the set of infinite paths $\paths^{\markov}(s)$.


A \emph{Markov decision process} (MDP) is a tuple $(S,s_I,A,P,\prop,\ell)$,
which extends a DTMC by allowing a choice between actions in each state.
The set of all actions is $A$ and the probabilistic transition function
becomes $P: S\times A\times S\mapsto [0,1]$,
where $P(s,a,s')$ is the probability to move to $s'$ when action $a$ is taken in $s$.

A \emph{strategy} (a.k.a., policy) of an MDP defines which
action is taken in each state, based on the history so far.
In their most general form, strategies are defined as functions
$\sigma: (S\times A)^* S\mapsto \Delta(A)$,
where $\Delta(A)$ is the set of distributions over $A$.
The behaviour of an MDP under a strategy $\sigma$
is defined by an \emph{induced DTMC}, which we denote by $M^\sigma$.
In general, $M^\sigma$ is infinite state.
In this paper, however, we can restrict to finite-memory strategies,
whose action choices depend only on the current state
and a finite set of memory values, since these suffice for objectives specified in LTL.
In this case, the induced DTMC $M^\sigma$ is finite~\cite{model-checking-book}.

\subsection{Probabilistic Linear Temporal Logic (\PLTL{})}

Probabilistic Linear Temporal Logic (\PLTL{}) is the probabilistic variant of the popular logic \LTL{}, and is commonly used to express the temporal behaviour of probabilistic systems.
A PLTL formula takes the form $\lP_{\bowtie p}[\varphi]$,
stating that the probability with which LTL formula $\varphi$ is satisfied
meets the probability threshold $\bowtie p$.
For example, PLTL formula $\lP_{\geq 0.9}[\lF(a\land\lF b)]$
means that the probability of observing proposition $a$ and then $b$ is at least 0.9.

In this work, we learn specifications expressed in an extension of \PLTL{},
which we call \PLTLB{}, that allows positive Boolean combinations of \PLTL{} formulas.


Formally, the syntax and semantics of these logics are defined as follows.
Firstly, LTL formulas $\varphi$ are defined inductively using the following grammar:
\begin{align*}
\varphi &::= p ~|~ \neg \varphi ~|~ \varphi \lor \varphi ~|~ \varphi \land \varphi ~|~ \lX\varphi ~|~ \varphi\lU\varphi,
\end{align*}
where $p\in\prop$ is an proposition, $\neg$ (not), $\lor$ (or) and $\land$ (and) are standard Boolean operators, and $\lX$ (neXt), $\lU$ (Until) are standard temporal operators.
We allow the standard temporal operators $\lF$ (Finally) and $\lG$ (Globally) as syntactic sugar, where $\lF\varphi:=\ltrue\lU\varphi$ and $\lG\varphi:=\neg\lF\neg\varphi$.

We interpret LTL formulas over paths of a DTMC.
The satisfaction of LTL formula $\varphi$ by (infinite) path $\pi$
is defined inductively as follows:
\begin{align*}
\pi\models p &\text{ iff } p\in\ell(\pi[0]) \\
\pi\models \neg \varphi &\text{ iff } \pi\not\models \varphi \\
\pi\models \lX \varphi &\text{ iff } \pi[1:]\models \varphi \\
\pi\models \varphi\lU \varphi' &\text{ iff } \text{there exists } i\in\nat: \pi[i:]\models \varphi' \\&\hspace{1cm}\text{ and for all } j<i: \pi[j:]\models\varphi
\end{align*}
We interpret Boolean combinations in the standard fashion and therefore omit the definitions.

A \PLTLB{} formula $\Phi$ is defined as:
\[\Phi ::= \lP_{\bowtie p}[\varphi] ~|~ \Phi \lor \Phi ~|~ \Phi \land \Phi,\] 
where $\bowtie \,\, \in \{<,>,\leq,\geq\}$, $p\in[0,1]$ is a probability threshold and $\varphi$ is an \LTL{} formula.

A \PLTLB{} (or \PLTL{) formula is interpreted over the states of a DTMC.
The satisfaction of \PLTLB{} formula $\Phi$ by a state~$s$
is defined as follows: 
\begin{align*}
s\models \lP_{\bowtie p}[\varphi] &\text{ iff } \PM^{\markov}(s \models \varphi) \bowtie p, 
\end{align*}
where $\PM^{\markov}(s \models \varphi) = \PM^{\markov}_s(\{\pi\in \paths^{\markov}(s)~|~\pi\models\varphi\})$
denotes the probability that LTL formula $\varphi$ is satisfied by a path starting in state $s$ of $\markov$.
We say that a DTMC $\markov$ satisfies a \PLTLB{} formula $\Phi$ if, for the initial state $s_I$ of $\markov$, $s_I\models\Phi$.

\section{Passive Learning of \PLTLB{} Formulas}

We frame the problem of learning a \PLTLB{} formula as a typical \emph{passive learning} problem~\cite{DBLP:journals/iandc/Gold78}.
Apart from being a fundamental learning problem, passive learning forms a key subroutine in other learning frameworks, such as active learning~\cite{CamachoM19} and learning from positive examples~\cite{ltl-from-positive-only}.

In this problem, we rely on a \emph{sample} $\sample = (P,N)$ consisting of a set $P$ of positive DTMCs and a set $N$ of negative DTMCs.
We define sample size $|\sample|$ as the total number of DTMCs in $\sample$.
The goal is to learn a concise \PLTLB{} formula~$\Phi$ that is \emph{consistent} with $\sample = (P, N)$, i.e., for all $\markov\in P$, $\markov\models \Phi$, and for all $\markov\in N$, $\markov\not\models\Phi$. 

To quantify conciseness, we measure the \emph{size} $\size{\Phi}$ of \PLTLB{} formulas $\Phi$.
To avoid checking redundant formulas, our learning algorithm uses LTL in \emph{negation normal form} (NNF), a standard syntactic form where negation applies only to atomic propositions.
We define the size of an \LTL{} formula by the number of operators $\circ\in\{\lF,\lX,\lG,\lU,\land,\lor\}$ and literals $\Lambda=\{p,\neg p~|~p\in \prop\}$ in the formula. For instance, the formula $\lF(p\land \lF (\neg q))$ has size 5.
The size of a \PLTLB{} formula $\Phi$ is defined exactly the same way.

We now formally define the problem of passive learning of \PLTL{} formulas.
\begin{problem}\label{prob:pass-learning}
Given a sample $\sample = (P,N)$, size bound $K$ and propositions $\prop$, learn a minimal \PLTLB{} formula $\Phi$ over $\prop$ such that: (i) $\Phi$ is consistent with $\sample$, and (ii) $|\Phi|\leq K$.
\end{problem}
A solution to Problem~\ref{prob:pass-learning} (which is not necessarily unique) is a concise \PLTLB{} formula $\Phi$ that distinguishes between the probabilistic temporal behaviour of the positive and negative DTMCs.
The size bound $K$ ensures three attributes for the solution formula $\Phi$: (i) it does not get too large, (ii) it does not overfit to the sample, and (iii) it makes the passive learning problem decidable, ensuring a terminating algorithm.



\section{The Learning Algorithm}

We now describe the learning algorithm that we propose to solve Problem~\ref{prob:pass-learning}.
Figure \ref{fig:overview} provides a high-level overview of our algorithm.
The algorithm consists of three main procedures: (i) \emph{grammar-based enumeration}, which efficiently enumerates through the space of LTL formulas, (ii) \emph{probabilistic threshold search}, which employs probabilistic model checking to determine whether a formula is consistent with the given sample, and (iii) \emph{Boolean set cover}, which constructs Boolean combinations of PLTL formulas to form a consistent formula. 
The algorithm iterates over formulas of increasing size, starting from 1, using GBE, and then checks the consistency of the formulas using PTS and BSC.
We now describe each of these procedures in detail.

\begin{figure}
    \centering
    \begin{tikzpicture}[node distance=2.7cm]
        \tikzstyle{block} = [rectangle, draw, text centered, rounded corners, font=\small]
        \tikzstyle{line} = [draw, -latex']
        \node[block, text width=5.7em] (block1) {Grammar-based enumeration (GBE)};
        \node [block, text width=5em, right of=block1] (block2) {Probabilistic threshold search (PTS)};
        \node [block, text width=5em, right of=block2] (block3) {Boolean set cover (BSC)};

        \draw[->] ++(-1.9,0) to node[midway, above, font=\scriptsize] {$n=1$} (block1.west);
        \draw[->] (block1.north east) to [out=60, in=120] node[midway, above, font=\scriptsize] {$\formulalist_n$} (block2.north west);
        \draw[->] (block2.south west) to [out=-120, in=-60] node[midway, below, font=\scriptsize] {$\discardlist_n$} (block1.south east);
        \draw[->] (block2) to node[midway, above, font=\scriptsize] {$\boollist_n$} (block3);
        \draw[->] (block3.north) -- ++(0,1) node at ($(block3.north) + (-3.5,1.2)$) [right, font=\scriptsize] {$n=n+1$} -| (block1.north);
        \draw[->] (block2.south) -- ++(0,-0.5) node[midway, right, font=\scriptsize] {\PLTL{} $\Phi^*$};
        \draw[->] (block3.south) -- ++(0,-0.5) node[midway, right, font=\scriptsize] {\PLTLB{} $\Phi^*$};
    \end{tikzpicture}
    \caption{The high-level overview of the learning algorithm. The set $\formulalist_n$ consists of formulas of size $n$, the set $\discardlist_n$ consists of discarded formulas, and the set $\boollist_n$ consists of formulas for Boolean combinations. The procedures PTS and BSC output a consistent \PLTL{} and \PLTLB{} formula $\Phi^*$, respectively, if they find one.}
    \label{fig:overview}
\end{figure}
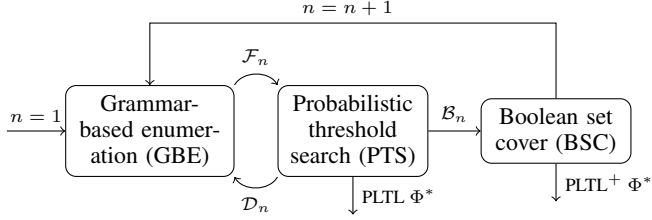

\subsection{Grammar-based Enumeration for \LTL{}}

The grammar-based enumeration (GBE) procedure incrementally explores the space of \LTL{} formulas. Since the number of syntactically distinct formulas grows exponentially with size\footnote{Asymptotically $\frac{7^n\sqrt{14}}{2\sqrt{\pi n^3}}$~\cite{DBLP:books/daglib/0023751}}, GBE employs pruning techniques to manage the search space efficiently.

Importantly, GBE relies on the \emph{nesting depth} (or depth, for brevity) of temporal operators in formulas.
We define the nesting depth $\depth{\varphi}$ for \LTL{} recursively as: $\depth{l} = 0$ for $l\in\literals$, 
$\depth{\varphi \op \varphi'} = \max(\depth{\varphi}, \depth{\varphi'})$ for $\op \in \{\land,\lor\}$, 
$\depth{\op\varphi} = 1 + \depth{\varphi}$ for $\op \in \{\lX, \lF, \lG\}$, $\depth{\varphi\lU\varphi'} = 1 + \max(\depth{\varphi}, \depth{\varphi'})$. 
For instance, the formula $\lF(p\land \lF (\neg q))$ has depth 2.

The nesting depth is essential for managing the search space effectively as well as ensuring the practical applicability of the formulas. 
Heavily nested formulas (e.g., those with a depth $>3$) are considered hard to interpret~\cite{CamachoM19} and are also uncommon in widely used LTL patterns~\cite{DBLP:conf/fmsp/DwyerAC98}. 
Therefore, GBE incorporates maximum depth $D$ as a parameter.

To build formulas $\formulalist_n$ of size $n$ and all depths $d\leq D$, GBE employs a bottom-up dynamic programming approach.
Specifically, the set $\formulalist_n$ is built as a union of subsets $\formulalist_n^d$ of LTL formulas of size $n$ and depth $d$.
GBE initializes $\formulalist_1^0 := \literals$ and $\formulalist_1^d := \emptyset$ for $0<d\leq D$.
It then inductively combines formulas from $\formulalist_n^d$ of different depths to form larger formulas.

\begin{algorithm}[tb]
    \caption{Inductive step in GBE}
    \label{alg:grammar-based-search}
    \textbf{Input}: $\formulalist_n$, Max depth $D$\\
    \vspace{-0.4cm}
    \begin{algorithmic}[1] 
        \FOR{$d = 0$ to $D$}
            \STATE $\formulalist_{n+1}^d = \emptyset$
                \FOR{$\varphi\in\formulalist_n^{d-1}$}
                    \STATE Construct $\psi = \circ\varphi$ for $\circ\in\{\lX, \lF, \lG\}$
                    \STATE Add $\psi$ to $\formulalist_{n+1}^d$ if \emph{temporal simplify} does not hold
                \ENDFOR
            \FOR{$k = 1$ to $n-1$}
            \FOR{$\varphi\in\formulalist_k^{d-1}$ and $\varphi'\in\bigcup_{d'<d}\formulalist_{n-k}^{d'}$ }
                \STATE $\psi = \varphi\lU\varphi'$
                \STATE Add $\psi$ to $\formulalist_{n+1}^{d}$ if \emph{Boolean simplify} does not hold
            \ENDFOR
            \FOR{$\varphi\in\formulalist_k^{d}$ and  $\varphi'\in\bigcup_{d'\leq d}\formulalist_{n-k}^{d'}$ }
                \STATE $\psi = \varphi\circ\varphi'$ for $\circ\in\{\land,\lor\}$
                \STATE Add $\psi$ to $\formulalist_{n+1}^{d}$ if \emph{Boolean simplify} does not hold
            \ENDFOR
            \ENDFOR
        \ENDFOR
        \STATE \textbf{return} $\formulalist_{n+1}$
    \end{algorithmic}
\end{algorithm}

The inductive step of GBE $\formulalist_{n+1}$ is outlined in Algorithm~\ref{alg:grammar-based-search}. It follows the LTL grammar, adding operators to smaller formulas to construct larger ones. The algorithm uses two heuristics to eliminate semantically equivalent ($\equiv$) formulas, where $\varphi \equiv \varphi'$ if and only if $\pi \models \varphi \leftrightarrow \pi \models \varphi'$ for any path $\pi\in (2^{\prop})^{\omega}$. These heuristics are briefly described here.

The first heuristic, \emph{temporal simplification}, removes redundant formulas by applying syntactic rules to rewrite LTL formulas into normal forms~\cite[Fig. 5.7]{model-checking-book}. For example, $\lF\lF(p) \equiv \lF(p)$, $\lF\lX(p) \equiv \lX\lF(p)$, and $\lF\lG\lF(p) \equiv \lG\lF(p)$ (see~\cite[Sec. 5.4]{spotrewriterules} for the full list). If a constructed formula is not in simplified form, it is discarded.
These checks are constant-time operations, making them highly efficient.

The second heuristic, \emph{Boolean simplification}, removes redundant Boolean combinations such as $\varphi_1 \land \varphi_2$ or $\varphi_1 \lor \varphi_2$, where $\varphi_1 \equiv \varphi_2$ or $\varphi_1 \equiv \neg \varphi_2$. It checks the syntactic equality of $\varphi_1$ and $\varphi_2$ via a linear-time scan of their syntax trees. 
It then checks semantic equivalence, which can be doubly exponential in formula size but is efficiently handled by modern \LTL{} satisfiability checkers~\cite{duret.22.cav}.

We formalise the completeness of the GBE procedure below\footnote{This result establishes GBE's complete search in isolation; when combined with PTS, as we see later in Section~\ref{subsec:PTS}, several formulas are pruned using heuristics.}, which can be proved by induction on formula size.
\begin{lemma}\label{lem:gbe-completeness}
$\bigcup_{n\leq N'}\formulalist_n$ computed by GBE consists of all semantically distinct formulas of size $\leq N'$ and depth $\leq D$.
\end{lemma}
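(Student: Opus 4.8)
The plan is to prove Lemma~\ref{lem:gbe-completeness} by induction on the formula size $n$, establishing a slightly stronger inductive hypothesis: for every $n \leq N'$ and every $d \leq D$, the set $\formulalist_n^d$ computed by GBE contains a representative of every equivalence class (under $\equiv$) of LTL formulas that admits a representative of size exactly $n$ and depth exactly $d$. Summing over $d \leq D$ and $n \leq N'$ then gives the claim. Throughout, "size'' and "depth'' are the syntactic notions $\size{\cdot}$ and $\depth{\cdot}$ defined in the excerpt, and the key point is that the recursive definitions of both are \emph{compositional}: the size (resp. depth) of $\circ\varphi$ and $\varphi \lU \varphi'$ and $\varphi \op \varphi'$ is determined by the sizes (resp. depths) of the immediate subformulas together with the top operator.

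For the base case $n = 1$, the only LTL formulas of size $1$ are the literals $\literals$, all of depth $0$; GBE sets $\formulalist_1^0 := \literals$ and $\formulalist_1^d := \emptyset$ for $d > 0$, so the hypothesis holds. For the inductive step, fix $n+1 \leq N'$ and take any LTL formula $\chi$ in NNF with $\size{\chi} = n+1$ and $\depth{\chi} = d \leq D$. Since $\size{\chi} > 1$, $\chi$ has a top-level operator, and I case-split on it. If $\chi = \circ\varphi$ with $\circ \in \{\lX, \lF, \lG\}$, then $\size{\varphi} = n$ and $\depth{\varphi} = d - 1 \leq D$, so by the induction hypothesis some $\varphi' \equiv \varphi$ lies in $\formulalist_n^{d-1}$; the inner loop of Algorithm~\ref{alg:grammar-based-search} constructs $\circ\varphi'$ as a candidate for $\formulalist_{n+1}^d$. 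If $\chi = \varphi \lU \varphi'$, then $\size{\varphi} + \size{\varphi'} + 1 = n+1$ and $\max(\depth{\varphi}, \depth{\varphi'}) = d-1$; writing $k = \size{\varphi}$, the induction hypothesis places an equivalent of $\varphi$ in $\formulalist_k^{d_1}$ and an equivalent of $\varphi'$ in $\formulalist_{n-k}^{d_2}$ with $d_1, d_2 \leq d-1$ and at least one equal to $d-1$ — and one checks the loop bounds in Algorithm~\ref{alg:grammar-based-search} (line 7 onward) range over exactly such $k$, $\varphi$ from $\formulalist_k^{d-1}$, and $\varphi'$ from $\bigcup_{d' < d}\formulalist_{n-k}^{d'}$, covering this case up to swapping which operand hits depth $d-1$. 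The $\land/\lor$ case is analogous with both operands allowed up to depth $d$ and the max equalling $d$. In each case the candidate is semantically equivalent to $\chi$ (congruence of $\equiv$ under each operator).

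The one real subtlety — and the step I expect to be the main obstacle — is that GBE does not merely enumerate these candidates but \emph{discards} some of them via temporal simplification and Boolean simplification. I must argue that discarding never loses an equivalence class. For Boolean simplification this is immediate: a formula $\varphi_1 \op \varphi_2$ with $\varphi_1 \equiv \varphi_2$ or $\varphi_1 \equiv \neg\varphi_2$ is equivalent to $\varphi_1$, $\ltrue$, or $\lfalse$, each of which is either already present (smaller size, covered by the induction hypothesis applied at an earlier stage) or not needed. The delicate part is temporal simplification: I need that the rewrite rules used are \emph{size-non-increasing and depth-non-increasing}, so that whenever a constructed $\circ\varphi'$ is discarded because it is not in normal form, its normal form is a genuinely "no-larger'' formula whose equivalence class is therefore already covered — either because it has strictly smaller size (handled at an earlier induction level) or the same size but is retained because it \emph{is} in normal form. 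Here I would invoke the termination and size-monotonicity properties of the standard LTL rewrite system cited (\cite[Fig. 5.7]{model-checking-book}, \cite{spotrewriterules}): each rule strictly decreases a well-founded measure while not increasing $\size{\cdot}$ or $\depth{\cdot}$, so the normal form of any size-$(n{+}1)$, depth-$d$ formula has size $\leq n+1$ and depth $\leq d$, and if it has size exactly $n+1$ it is itself in normal form and hence retained by GBE. Chaining the congruence of $\equiv$ with this observation closes the induction. Finally I should note the footnote's caveat is not needed for this lemma: the statement is about GBE in isolation, so no PTS-based pruning enters.
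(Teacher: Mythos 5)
Your proof is correct and follows essentially the same route as the paper's: induction on formula size, a case split on the top-level operator using the compositionality of size and depth, and an argument that the two simplification heuristics only ever discard formulas whose equivalence class is already represented. If anything you are more careful than the paper on the one delicate point — the paper asserts that a temporally simplified formula is always equivalent to something already in $\bigcup_{n\leq N'}\formulalist_n$, whereas you correctly note that size-preserving rewrites such as $\lF\lX\varphi \equiv \lX\lF\varphi$ require the normal form to be retained at the \emph{same} size rather than found at a smaller one.
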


\subsection{Probabilistic Threshold Search for~\PLTL{}}
\label{subsec:PTS}
The probabilistic threshold search (PTS) procedure steps through the formulas generated by GBE and evaluates the likelihood of the formulas being satisfied/consistent. The main steps of PTS are outlined in Algorithm~\ref{alg:prob-search}.

PTS first computes the probability measure for a formula $\varphi\in\formulalist_n$  
for each $\markov$ in the given sample.
More specifically, PTS computes the vector $\probvector^{\markov,\varphi}: S\to [0,1]$, mapping each state $s \in S$ of $\markov$ to $\PM^{\markov}(s \models \varphi)$. We use $v^{\markov,\varphi}_I = \probvector^{\markov,\varphi}(s_I)$ to denote the probability for the initial state $s_I$ of $M$. 

Our implementation, which is based on the PRISM tool~\cite{DBLP:conf/cav/KwiatkowskaNP11},
deploys standard probabilistic LTL model checking procedures~\cite{model-checking-book}.
First, the LTL formula $\varphi$ is translated into an equivalent deterministic Rabin automaton (DRA) $\automata_\varphi$.
Then the product DTMC $\markov\times\automata_\varphi$, which combines the DTMC and DRA, is constructed and solved using standard numerical methods based on value iteration.



PTS exploits the computed vector $\probvector^{\markov,\varphi}$ to search for a formula that has a higher probability of satisfaction in the positive examples than in the negative examples.
For this, it computes the minimum probability $p_{\varphi} = \min\{v^{\markov,\varphi}_I~|~ \markov\in P\}$
of satisfaction of $\varphi$ among the samples in the set $P$, and
the maximum probability $n_{\varphi} = \max\{v^{\markov,\varphi}_I~|~ \markov\in N\}$ of satisfaction of $\varphi$ among the samples in the set $N$.

To identify a significant probabilistic difference in the temporal behaviour, PTS employs a small tolerance parameter $\delta\in (0,0.1)$.
We can understand this parameter using the introductory example from Figure~\ref{fig:office-world}.
In this example, the probability of reaching the office, i.e., satisfying $\lF(\text{\office})$, can be slightly lower in the positive example (say 0.94) compared to the negative examples (say, 1.0 and 0.95).
This small difference could arise because the agent takes a slightly longer slippery route in the positive example than in the second negative example.
However, $\lF(\text{\office})$ is not a primary distinguishing factor between the positive and negative examples and must not be considered by PTS.

Thus, PTS checks if the difference $p_{\varphi} - n_{\varphi}$ between the probability of satisfaction of $\varphi$ in the positive and negative examples is greater than the tolerance $\delta$.
If indeed $p_{\varphi} - n_{\varphi}>\delta$, then PTS outputs the formula $\Phi = \lP_{> m_{\varphi}}[\varphi]$, where the threshold $m_{\varphi} = \frac{p_{\varphi}+n_{\varphi}}{2}$.
While any threshold between $p_{\varphi}$ and $n_{\varphi}$ could be chosen, the choice of the mean of the two values is to reduce overfitting to the input sample.

We state the soundness of the PTS procedure as follows.
\begin{lemma}\label{lem:pts-correctness}
    If PTS returns a \PLTL{} formula $\Phi$, then $\Phi$ is consistent with sample $\sample$.
\end{lemma}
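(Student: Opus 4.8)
The plan is to verify, directly from the definitions, that the formula $\Phi = \lP_{> m_{\varphi}}[\varphi]$ output by PTS separates the positive and negative DTMCs according to the satisfaction relation for \PLTLB{}. The whole argument is an unwinding of the threshold arithmetic that PTS performs, so the proof is short; the only subtlety is keeping the quantifiers ($\min$ over $P$, $\max$ over $N$) straight.

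First I would recall the precondition under which PTS actually returns something: it returns $\Phi = \lP_{> m_{\varphi}}[\varphi]$ with $m_{\varphi} = \tfrac{p_{\varphi}+n_{\varphi}}{2}$ only when $p_{\varphi} - n_{\varphi} > \delta > 0$, where $p_{\varphi} = \min\{v^{\markov,\varphi}_I \mid \markov\in P\}$ and $n_{\varphi} = \max\{v^{\markov,\varphi}_I \mid \markov\in N\}$. In particular $p_{\varphi} > n_{\varphi}$, so $n_{\varphi} < m_{\varphi} < p_{\varphi}$, and recall $v^{\markov,\varphi}_I = \PM^{\markov}(s_I \models \varphi)$ by definition of the probability vector.

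Next I would check consistency on the positive side. Take any $\markov\in P$. By definition of $p_{\varphi}$ as a minimum over $P$, we have $\PM^{\markov}(s_I \models \varphi) = v^{\markov,\varphi}_I \ge p_{\varphi} > m_{\varphi}$, hence $\PM^{\markov}(s_I \models \varphi) > m_{\varphi}$, which is exactly the condition $s_I \models \lP_{> m_{\varphi}}[\varphi]$; so $\markov \models \Phi$. Symmetrically, for any $\markov\in N$, the definition of $n_{\varphi}$ as a maximum over $N$ gives $\PM^{\markov}(s_I \models \varphi) = v^{\markov,\varphi}_I \le n_{\varphi} < m_{\varphi}$, so $\PM^{\markov}(s_I \models \varphi) \not> m_{\varphi}$, i.e.\ $s_I \not\models \lP_{> m_{\varphi}}[\varphi]$, hence $\markov \not\models \Phi$. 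Combining the two cases yields that $\Phi$ is consistent with $\sample = (P,N)$ in the sense of Problem~\ref{prob:pass-learning}.

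There is essentially no hard step here — it is a definitional check — but the point that requires a little care is the strict-versus-non-strict inequalities: because PTS uses the strict quantifier $\lP_{> m_{\varphi}}$ and the strict separation $p_{\varphi} > n_{\varphi}$, the mean $m_{\varphi}$ lies \emph{strictly} between the two bounds, so both the positive-side inequality ($> m_{\varphi}$) and the negative-side inequality ($\le n_{\varphi} < m_{\varphi}$) come out strict in the right direction. I would also remark that the tolerance $\delta$ plays no role in the correctness argument itself — it only affects \emph{whether} PTS returns a formula, not whether a returned formula is consistent — so the bound $p_{\varphi} - n_{\varphi} > \delta$ can be weakened to $p_{\varphi} > n_{\varphi}$ for the purposes of this lemma.
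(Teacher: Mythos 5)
Your proof is correct and follows essentially the same route as the paper's: unwind the definition of $m_{\varphi}$ as the midpoint, use the min/max definitions of $p_{\varphi}$ and $n_{\varphi}$ to place every positive DTMC strictly above and every negative DTMC strictly below the threshold, and conclude consistency from the semantics of $\lP_{> m_{\varphi}}[\varphi]$. If anything, your version is slightly more explicit than the paper's about why the quantifier directions and strict inequalities work out, and your closing remark that $\delta$ only gates whether a formula is returned is a fair observation.
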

\begin{proof}
    PTS always returns a formula of the form $\Phi=\lP_{> m_{\varphi}}[\varphi]$, where $m_{\varphi} = \frac{p_{\varphi}+n_{\varphi}}{2}$  and $p_{\varphi} - n_{\varphi}>\delta$.
    We can state that 
    \begin{align*}
    \forall M\in P,v^{\markov,\varphi}_I > m_\varphi \text{ iff } \PM^{\markov}(s_I\models\varphi) > m_\varphi \text{ iff } M\models \Phi,\\
    \forall M\in N, v^{\markov,\varphi}_I < m_\varphi \text{ iff } \PM^{\markov}(s_I\models\varphi) < m_\varphi \text{ iff } M\not\models \Phi.
    \end{align*}
\end{proof}
Note that PTS restricts the search to only \PLTL{} formulas of the form $\lP_{> p}[\varphi]$.
The relation $\geq$ is not required due to the non-zero parameter $\delta$, while the $<$ relation can be derived from the $>$ relation and the dual \LTL{} formula $\neg \varphi$ using the relation $\lP_{< p}[\varphi] \equiv \lP_{> 1-p}[\neg \varphi]$.

If a formula $\varphi$ is not consistent, PTS discards it by adding it to $\discardlist_n$ if it is not useful for the next GBE iterations; otherwise, it adds $\varphi$ to $\boollist_n$ for Boolean combinations.
We briefly discuss the heuristics used for discarding formulas.

This heuristic, \emph{inconsistency removal}, discards $\varphi$ if the following condition holds: $\probvector^{\markov,\varphi}\equiv \mathbf{0}$ for each $\markov\in P$, or $\probvector^{\markov,\varphi}\equiv \mathbf{1}$ for each $\markov\in N$, where $\mathbf{0}$ and $\mathbf{1}$ are the vectors with all zeros and all ones, respectively.
In simpler terms, $\varphi$ is discarded if it is unsatisfiable in any state of the positive DTMCs or universally satisfied in all states of the negative DTMCs. Such formulas cannot be meaningfully combined in subsequent iterations, as stated below for the positive cases; a similar argument applies to the negative cases.
\begin{lemma}\label{lem:inconsistency-removal}
    Let $\probvector^{\markov,\varphi}\equiv \mathbf{0}$ for each $\markov\in P$. Then $\varphi$ cannot be a subformula of a minimal consistent \PLTL{} formula.
    \end{lemma}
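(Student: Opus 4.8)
The plan is to show that if $\varphi$ has probability measure identically $\mathbf{0}$ in every positive DTMC, then any $\PLTL{}$ formula built using $\varphi$ as a subformula, when evaluated on a positive DTMC, collapses to one that does not mention $\varphi$ at all, and hence a shorter consistent formula would exist, contradicting minimality. The key observation is structural: in $\PLTLB{}$, a subformula $\varphi$ can only appear inside some probabilistic operator $\lP_{\bowtie p}[\psi]$ where $\varphi$ is a subformula of the $\LTL{}$ formula $\psi$. So the argument reduces to reasoning about how a null-probability $\LTL{}$ subformula influences the $\LTL{}$ formula containing it.

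First I would make precise the hypothesis: $\probvector^{\markov,\varphi} \equiv \mathbf{0}$ for $\markov \in P$ means $\PM^{\markov}(s \models \varphi) = 0$ for \emph{every} state $s$ of $\markov$, not just the initial one — this is exactly why the statement is about all states. This uniformity is what lets me propagate the fact through nested temporal operators: for any state $s$ and any $\LTL{}$ formula $\psi'$, almost no path from $s$ ever reaches a suffix satisfying $\varphi$ (by a union bound over the countably many suffix positions, each of which, by the Markov property, contributes probability $\PM^{\markov}(\pi[i] \models \varphi) = 0$). Consequently, replacing $\varphi$ by $\lfalse$ inside any $\LTL{}$ formula $\psi$ does not change $\PM^{\markov}(s \models \psi)$ for any state $s$ of any $\markov \in P$: one argues by induction on the structure of $\psi$ that $\PM^{\markov}_s(\{\pi : \pi \models \psi\}) = \PM^{\markov}_s(\{\pi : \pi \models \psi[\varphi \mapsto \lfalse]\})$, using at the base/temporal cases that the set of paths ever having a suffix in $\brck{\varphi}$ is null. (Since the paper uses NNF, $\lfalse$ — or rather a literal simplification like $p \wedge \neg p$, or just simplifying the enclosing connective — is available, but really I would phrase it as: the enclosing $\LTL{}$ formula is semantically equivalent, relative to the measures on the positive DTMCs, to one in which $\varphi$'s slot is filled by $\lfalse$, which by standard $\LTL{}$ simplification shrinks the formula.)

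Then I would run the minimality contradiction. Suppose $\Phi$ is a minimal consistent $\PLTL{}$ formula with $\varphi$ as a subformula, so $\Phi = \lP_{\bowtie p}[\psi]$ with $\varphi$ occurring in $\psi$. Form $\psi'$ by substituting $\lfalse$ for that occurrence of $\varphi$ and applying Boolean/temporal simplifications; then $\size{\psi'} < \size{\psi}$, hence $\Phi' := \lP_{\bowtie p}[\psi']$ is strictly smaller than $\Phi$. By the argument above, $\PM^{\markov}(s_I \models \psi) = \PM^{\markov}(s_I \models \psi')$ for every $\markov \in P$, so $\Phi'$ agrees with $\Phi$ on all positive DTMCs, i.e. $\markov \models \Phi'$ for all $\markov \in P$. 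For the negative DTMCs there is a subtlety: we need $\markov \not\models \Phi'$ for $\markov \in N$. This does \emph{not} follow in general — and indeed the lemma as stated only claims $\varphi$ "cannot be a subformula of a minimal consistent $\PLTL{}$ formula," so I expect the intended reading is that $\Phi'$ is \emph{also} consistent; if $\Phi'$ happened to fail on some negative example, then $\Phi$ was not doing anything useful via $\varphi$ on the positives anyway and one needs a slightly more careful setup. The cleanest route is: among all consistent $\PLTL{}$ formulas of minimal size, none can contain such a $\varphi$, because if one did we could replace $\varphi$ by $\lfalse$ to get a smaller formula that is still consistent with $P$; but to close the argument we must also ensure consistency with $N$ is preserved or re-established.

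The main obstacle, then, is precisely this last point — guaranteeing that collapsing $\varphi$ to $\lfalse$ keeps the formula inconsistent with the negative DTMCs. I would handle it by strengthening the claim slightly: show that $\Phi$ and $\Phi'$ are \emph{equivalent on the positive DTMCs}, and then observe that if $\Phi'$ is not consistent with $N$, then $\Phi$ carves out the negatives \emph{only} through behaviour that is invisible on the positives, which still contradicts minimality once we note that there must then be \emph{some} minimal consistent formula, and by the substitution argument we can always pick one avoiding $\varphi$; formally, take a minimal consistent $\Phi$ and observe the substitution yields a consistent-with-$P$ formula $\Phi'$ of smaller size, and since the problem only asks for existence of a minimal consistent formula, the existence of $\Phi$ with $\varphi$ contradicts that $\varphi$ appears in \emph{every} minimal one — which is all the lemma needs for pruning. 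I would state the lemma's use carefully in the proof to make this precise, rather than claiming every superformula of $\varphi$ is inconsistent.
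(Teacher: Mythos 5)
Your first half is sound and in fact more careful than the paper's own argument: the paper asserts that $\probvector^{\markov,\varphi}(s)=0$ implies \emph{no} path from $s$ satisfies $\varphi$, which is false in general for $\omega$-regular events on DTMCs, whereas your union bound over suffix positions correctly establishes only that the set of paths possessing \emph{some} suffix satisfying $\varphi$ is null, which is all that is needed to conclude that $\psi$ and $\psi[\varphi\mapsto\lfalse]$ have equal probability in every state of every $\markov\in P$. Structurally you differ from the paper: you perform one global substitution and compare measures, while the paper does a local case analysis on the operator immediately enclosing $\varphi$, showing that conjunctive contexts ($\land$, $\lX$, $\lF$, $\lG$, right argument of $\lU$) propagate the all-zero vector upward (so the enclosing formula again satisfies the lemma's hypothesis and one recurses), and that disjunctive contexts ($\lor$, left argument of $\lU$) collapse to the other argument.

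The genuine gap is exactly the one you flag and then fail to close: preservation of inconsistency on the negative DTMCs. Your final paragraph does not resolve it --- ``we can always pick a minimal consistent formula avoiding $\varphi$'' is precisely the statement to be proved, and your substitution argument as given only controls the positives. The missing ingredient is monotonicity. Because the learner works in negation normal form, every context $C[\cdot]$ is monotone: $\lfalse$ implies $\varphi$ pointwise on paths, hence $C[\lfalse]$ implies $C[\varphi]$ on every path of \emph{every} DTMC, so $\PM^{\markov}(s_I\models C[\lfalse])\leq\PM^{\markov}(s_I\models C[\varphi])$ for all $\markov$, including $\markov\in N$. Since PTS and BSC only ever emit lower-bound thresholds $\lP_{>p}[\cdot]$ with $p>0$ (the paper notes that $<$ is handled by dualising the inner LTL formula), a negative example violating $\lP_{>p}[C[\varphi]]$, i.e.\ having probability $\leq p$, still violates $\lP_{>p}[C[\lfalse]]$, whose probability is no larger; on the positives the two probabilities coincide by your null-set argument. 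Hence the simplified, strictly smaller formula remains consistent, and the minimality contradiction goes through. Without restricting to $>$-thresholds and NNF the claim would not survive your substitution, so these hypotheses must be invoked explicitly. (The paper's own write-up also glosses over the negatives in its $\varphi\lU\varphi'$ and $\varphi\lor\varphi'$ cases; there the fix is the same set-inclusion $\{\pi:\pi\models\varphi'\}\subseteq\{\pi:\pi\models\varphi\lU\varphi'\}$, which gives the needed one-sided inequality on $N$.)
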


Examples of formulas that can be discarded from the introductory example include $\lF(\text{\decor} \land \text{\coffee})$, $\lG(\text{\coffee})$, and $\lG(\text{\decor})$ since they never hold in any state of the positive DTMC.


\begin{algorithm}[tb]
    \caption{Probabilistic Threshold Search (PTS)}
    \label{alg:prob-search}
    \textbf{Input}: $\formulalist_n$, Probabilistic tolerance $\delta$\\
    \vspace{-0.4cm}
    \begin{algorithmic}[1]
        \FOR{$d = 0$ to $D$, $\varphi\in\formulalist_n^d$}
                \STATE Compute $\probvector^{\markov,\varphi}$ for each $\markov$ in $\sample$
                \STATE $p_{\varphi} = \underset{\markov\in P}{\min}\{v^{\markov,\varphi}_I \}$, $n_{\varphi} = \underset{\markov\in N}{\max}\{v^{\markov,\varphi}_I\}$
                \IF{$p_{\varphi} - n_{\varphi}> \delta$}
                    \STATE \textbf{return} $\Phi = \lP_{> \frac{p_{\varphi}+n_{\varphi}}{2}}[\varphi]$
                \ELSE
                    \STATE Add $\varphi$ to $\discardlist_n$ if \emph{inconsistency removal} holds
                    \STATE Add $\varphi$ to $\boollist_n$ otherwise
                \ENDIF
        \ENDFOR
    \end{algorithmic}
\end{algorithm}

\subsection{Boolean Set Cover for \PLTLB{}}
The Boolean Set Cover (BSC) procedure combines \PLTL{} formulas using Boolean operations. 
We adapt this procedure, originally introduced in~\cite{scarlet}, to accommodate for probability thresholds.
The steps of our algorithm are detailed in Algorithm~\ref{alg:boolean-set-cover}.

First, BSC discards formulas from $\boollist_n$ that are not useful for Boolean combinations.
For this, it uses a condition similar to, but weaker than, inconsistency removal used in PTS:
$v^{\markov,\varphi}_I = 0$ for all $\markov \in P$, or $v^{\markov,\varphi}_I = 1$ for all $\markov \in N$. 

For the remaining formulas, BSC assesses how close they are to being a consistent formula. 
To do this, it relies on the function $c(\varphi, r)$, which quantifies the quality of $\varphi$ with probability threshold $r$, defined as follows:
\[
c(\varphi, r) = \Big[\sum\limits_{M\in P} \small{\llbracket v^{\markov,\varphi}_I > r\rrbracket} + \sum\limits_{M\in N} \small{\llbracket v^{\markov,\varphi}_I}<r\rrbracket\Big],
\]
where $\llbracket\cdot\rrbracket$ denotes the Iverson bracket, evaluating to 1 if the condition holds, and 0 otherwise.
We have $c(\varphi,r)=|\sample|$ if and only if $\lP_{>r}[\varphi]$ is consistent with $\sample$.

BSC computes, for each \LTL{} formula $\varphi\in \boollist_n$, a maximal probability threshold $r^* = \arg\max_{r\in(0,1)} c(\varphi,r)$ that maximizes consistency with $\sample$. This can be computed via a linear scan over the sorted list of probabilities $v^{\markov,\varphi}_I$ for $M$ in $\sample$.

BSC then constructs the \PLTL{} formula $\Phi=\lP_{> r^*}[\varphi]$ along with its score $\score(\Phi) = c(\varphi,r^*)/(1+\sqrt{|\Phi|})$ and adds it to a heap $\heap$.
The scoring function and the subsequent steps of BSC are as in~\cite{scarlet}.
Briefly, a maximum Boolean combination limit $L$ is considered. 
The \PLTL{} formulas with the $L$ highest scores are selected, and combined as disjunctions and conjunctions with all formulas in $\heap$.

We have the soundness of BSC for \PLTL{} based on~\cite{scarlet}.
\begin{lemma}\label{lem:bsc-correctness}
If BSC returns a \PLTLB{} formula $\Phi$, then $\Phi$ is consistent with sample $\sample$.
\end{lemma}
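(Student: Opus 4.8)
The plan is to trace through what BSC returns and verify consistency directly from the definition of the function $c$ and the semantics of $\PLTLB{}$. A $\PLTLB{}$ formula $\Phi$ is consistent with $\sample = (P,N)$ exactly when $M \models \Phi$ for all $M \in P$ and $M \not\models \Phi$ for all $M \in N$. I would first establish the base case: for a single $\LTL{}$ formula $\varphi$ with chosen threshold $r^*$, the formula $\Phi = \lP_{>r^*}[\varphi]$ satisfies $c(\varphi, r^*) = |\sample|$ iff $v^{\markov,\varphi}_I > r^*$ for every $M \in P$ and $v^{\markov,\varphi}_I < r^*$ for every $M \in N$, which by the semantics $s_I \models \lP_{>r^*}[\varphi]$ iff $\PM^{\markov}(s_I\models\varphi) > r^*$ is precisely consistency of $\Phi$. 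This is the remark already noted in the text ($c(\varphi,r)=|\sample|$ iff $\lP_{>r}[\varphi]$ is consistent), so the base case is essentially immediate. BSC only returns a $\PLTL{}$ formula at this stage if its score is maximal, which in particular requires $c(\varphi,r^*) = |\sample|$, i.e. it is consistent.

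The inductive step concerns the Boolean combinations. Here I would argue over the structure of how BSC builds formulas: it takes the top-$L$ scoring $\PLTL{}$ formulas from the heap $\heap$ and forms disjunctions $\Phi_1 \lor \Phi_2$ and conjunctions $\Phi_1 \land \Phi_2$ with formulas already in $\heap$. The key observation is that BSC only outputs a formula $\Phi$ when it is found to be consistent — that is, the return condition is exactly that the candidate combination is consistent with $\sample$ (equivalently, achieves "coverage" of all of $\sample$ in the set-cover sense inherited from~\cite{scarlet}). So the lemma is really a statement that the consistency check BSC performs before returning is faithful to the $\PLTLB{}$ semantics. I would make this precise by noting that for any $\PLTLB{}$ formulas $\Phi_1, \Phi_2$ and any DTMC $M$, $M \models \Phi_1 \lor \Phi_2$ iff $M \models \Phi_1$ or $M \models \Phi_2$, and dually for $\land$ (standard Boolean semantics over the initial state), so the set of positive DTMCs satisfying $\Phi_1 \lor \Phi_2$ is the union of those satisfying $\Phi_1$ and $\Phi_2$, and the set satisfying $\Phi_1 \land \Phi_2$ is the intersection — exactly the bookkeeping that the set-cover procedure performs on the precomputed satisfaction bits.

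The main obstacle is that the lemma as stated leans on the internal details of the adapted set-cover procedure from~\cite{scarlet} which are only "briefly" described in the excerpt, so the proof must be careful to isolate precisely which invariant is being used: namely that BSC returns $\Phi$ only if, for the precomputed vectors $v^{\markov,\varphi}_I$, every positive DTMC's initial-state satisfaction bit for $\Phi$ is $1$ and every negative one's is $0$. Given that invariant (which is where the appeal to~\cite{scarlet} does the work), consistency follows from the two semantic facts above by structural induction on the Boolean combination. I would therefore structure the writeup as: (1) semantics of $\lP_{>r}[\varphi]$ reduces consistency to the sign pattern of $v^{\markov,\varphi}_I - r$; (2) $c(\varphi,r^*)=|\sample|$ captures exactly that pattern, giving the base case; (3) Boolean operators act as union/intersection on the satisfying-DTMC sets, so the set-cover termination condition ("all of $\sample$ covered") is equivalent to consistency of the combined $\PLTLB{}$ formula; hence whatever BSC returns is consistent. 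The routine part — checking the union/intersection bookkeeping matches the Iverson-bracket computation — I would not grind through, citing~\cite{scarlet} for the mechanics.
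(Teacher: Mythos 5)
Your argument is correct and matches the paper's intent: the paper gives no proof beyond citing~\cite{scarlet}, and the justification is exactly what you describe --- BSC only stores a formula after an explicit consistency check (Line~8 of Algorithm~\ref{alg:boolean-set-cover}), whose bookkeeping on the precomputed vectors $v^{\markov,\varphi}_I$ is faithful to the \PLTLB{} semantics, with the base case given by the stated equivalence that $c(\varphi,r)=|\sample|$ iff $\lP_{>r}[\varphi]$ is consistent. One small correction: having a maximal score does not by itself force $c(\varphi,r^*)=|\sample|$ (the top-$L$ heap entries need not be consistent), so soundness rests entirely on the explicit consistency test before a combination is stored, not on score maximality.
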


\begin{algorithm}[tb]
    \caption{Boolean Set Cover (BSC) for \PLTL{}}
    \label{alg:boolean-set-cover}
    \textbf{Input}: $\boollist_n$, Max size $K$, Max Limit $L$\\
    \vspace{-0.4cm}
    \begin{algorithmic}[1]
        \STATE Discard formulas from $\boollist_n$ not suitable for bool comb
        \FOR{$\varphi\in \boollist_n$}
            \STATE Compute $\Phi = \lP_{> r^*}[\varphi]$, score $\score(\Phi)$ and add to $\heap$
        \ENDFOR
        \STATE $\heap^\ast\gets$ Highest $L$ formulas in $\heap$ w.r.t score $\score$
		\FOR{$\Psi \in \heap$ and $\Phi \in \heap^\ast$}
            \STATE $\Phi':=\Psi \circ \Phi$ for $\circ\in\{\land,\lor\}$
            \IF{$|\Phi'|\leq K$ and $\Phi'$ is consistent}
            \STATE Store $\Phi'$ as consistent and update $K\gets|\Phi'|-1$
            \ENDIF
		\ENDFOR
    \end{algorithmic}
\end{algorithm}

\paragraph{Theoretical guarantees.} We state the guarantees of our algorithm with respect to the search space $\Theta(K,D,\delta)$ of \PLTL{} formulas constrained by the considered parameters, i.e., size $\leq K$, depth $\leq D$ and tolerance $>\delta$.
\begin{theorem}\label{thm:guarantees}
    Given sample $\sample$, size $K$, depth $D$, and tolerance $\delta$, our learning algorithm has the following guarantees:
    \begin{itemize}
        \item (soundness) if it returns a \PLTLB{} formula $\Phi$, then $\Phi$ is consistent with $\sample$ and $|\Phi|\leq K$, and
        \item (completeness and minimality) if there exists a \PLTL{} formula in $\Theta(K,D,\delta)$ consistent with $\sample$, then it returns a minimal \PLTLB{} formula.
    \end{itemize}
\end{theorem}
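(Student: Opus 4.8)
The plan is to prove the two items separately, leaning on the component lemmas already established.

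\textbf{Soundness.} The algorithm emits a formula in only two places: the \texttt{return} of PTS (a \PLTL{} formula, Algorithm~\ref{alg:prob-search}) and the point where BSC stores a consistent $\Phi'$ (Algorithm~\ref{alg:boolean-set-cover}). Consistency in the first case is exactly Lemma~\ref{lem:pts-correctness} and in the second case Lemma~\ref{lem:bsc-correctness}. For the size bound, the outer loop runs the size parameter $n$ upwards from $1$, so the $\varphi$ handed to PTS has $\size{\varphi}=n\le K$, hence $\size{\lP_{>m_\varphi}[\varphi]}\le K$; and BSC stores $\Phi'$ only after the explicit guard $\size{\Phi'}\le K$, immediately tightening $K$. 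Thus any returned $\Phi$ is consistent with $\sample$ and has $\size{\Phi}\le K$.

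\textbf{Completeness and minimality.} Assume some \PLTL{} formula in $\Theta(K,D,\delta)$ is consistent with $\sample$. Since \PLTL{}$\,\subseteq\,$\PLTLB{}, the class of consistent \PLTLB{} formulas compatible with the parameters (size $\le K$, depth $\le D$, and the $\delta$-tolerance filtering of PTS) is non-empty; let $k^\ast\le K$ be its minimum size and $\Phi_{\min}$ a witness. The main loop enumerates sizes $n=1,2,\dots$ in order, keeps the smallest consistent formula found so far as the running bound $K$, and does not halt while $n\le K$; hence it suffices to show: (1) the procedure terminates, and (2) by the time $n$ reaches $k^\ast$, a consistent \PLTLB{} formula of size $\le k^\ast$ has been produced --- which, by minimality of $k^\ast$, has size exactly $k^\ast$. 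Termination is immediate: only finitely many \LTL{} formulas have size $\le K$ and depth $\le D$; GBE, PTS (finite-state DTMC model checking) and BSC each terminate; and $K$ never increases while $n$ strictly increases. For (2), if $\Phi_{\min}$ is a single component $\lP_{\bowtie p}[\psi]$ then $\size{\psi}=k^\ast$, and by Lemma~\ref{lem:gbe-completeness} GBE produces a representative $\psi'\equiv\psi$ with $\size{\psi'}\le k^\ast$ (minimality of $k^\ast$ forces equality) and $\depth{\psi'}\le D$; PTS then evaluates $\psi'$, and since $\lP_{\bowtie p}[\psi']$ is consistent and in $\Theta$ its margin $p_{\psi'}-n_{\psi'}$ exceeds $\delta$, so PTS returns a consistent \PLTL{} formula of size $k^\ast$. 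If $\Phi_{\min}$ is a non-trivial positive Boolean combination, every \LTL{} component of it has size $<k^\ast$ and --- as argued next --- may be assumed generated by GBE and routed into the $\boollist$ sets, so by the Scarlet-style behaviour of BSC a consistent \PLTLB{} formula of size $\le k^\ast$ is produced before $n$ reaches $k^\ast$.

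\textbf{Pruning loses nothing.} What remains is to justify that the heuristics do not discard anything needed to recover \emph{some} minimal \PLTLB{} formula. GBE's temporal and Boolean simplification only collapse semantic duplicates, which is exactly what Lemma~\ref{lem:gbe-completeness} accounts for. For PTS's inconsistency removal (and BSC's weaker discarding), I would first \emph{normalize} $\Phi_{\min}$: if it contained a component $\lP_{\bowtie r}[\varphi]$ whose \LTL{} part $\varphi$ triggers the inconsistency-removal condition (identically $\mathbf 0$ on all of $P$, or identically $\mathbf 1$ on all of $N$), then a short case analysis on $\bowtie$ and on whether that component sits under a $\land$ or a $\lor$ shows it is either a tautology or a contradiction over $\sample$ (delete it using monotonicity of the positive Boolean structure together with $\top$- and $\bot$-simplification, yielding a smaller consistent formula) or is constant over all of $P$ (resp.\ $N$), which forces the remainder of $\Phi_{\min}$ to already be consistent; either conclusion contradicts minimality of $k^\ast$. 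Thus $\Phi_{\min}$ may be taken so that no \LTL{} component ever enters a $\discardlist$; Lemma~\ref{lem:gbe-completeness} then keeps these components (up to equivalence, preserving size and depth) alive through GBE and into the $\boollist$ sets, while Lemma~\ref{lem:inconsistency-removal} and its negative-case counterpart certify that the formulas actually discarded were useless. The correctness of BSC and the adequacy of its $L$-highest-score selection for assembling minimal combinations are inherited from~\cite{scarlet}, transposed to the threshold choice $r^\ast=\arg\max_r c(\varphi,r)$.

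\textbf{Main obstacle.} The delicate point is the normalization step together with PTS's restriction to the relation $>$: (i) Lemma~\ref{lem:inconsistency-removal} is phrased for \LTL{} subformulas of a minimal \PLTL{} formula and must be lifted to the \PLTL{} components (and the subformulas inside them) of a minimal \PLTLB{} formula via the monotone-Boolean case analysis above; and (ii) a consistent witness that natively uses $\le$, $<$ or $\ge$ must be rewritten with $>$ (using $\lP_{<p}[\varphi]\equiv\lP_{>1-p}[\neg\varphi]$) while staying inside $\Theta(K,D,\delta)$, which requires accounting for the size of $\neg\varphi$ in NNF --- the one spot where the absence of a Release operator from the grammar must be handled, e.g.\ by charging size to the dualised form or by exhibiting a no-larger $>$-witness. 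Everything else is bookkeeping over the increasing-size loop and re-use of Lemmas~\ref{lem:gbe-completeness}--\ref{lem:bsc-correctness}.
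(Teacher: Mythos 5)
Your proof is correct and takes essentially the same route as the paper's, which disposes of the theorem in three sentences by citing Lemma~\ref{lem:pts-correctness} and Lemma~\ref{lem:bsc-correctness} for soundness, Lemma~\ref{lem:gbe-completeness} and Lemma~\ref{lem:inconsistency-removal} for completeness, and the increasing-size iteration for minimality. The additional care you take --- the explicit size-bound bookkeeping, the normalization of the witness to the $\lP_{>p}$ form, and the lifting of the inconsistency-removal argument to Boolean combinations --- addresses genuine subtleties that the paper's own proof simply glosses over, so nothing in your version conflicts with it.
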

\begin{proof}
Soundness follows from the correctness of PTS (Lemma~\ref{lem:pts-correctness}) and BSC (Lemma~\ref{lem:bsc-correctness}) in outputting a consistent formula.
Completeness is ensured by the exhaustive enumeration by GBE (Lemma~\ref{lem:gbe-completeness}), discarding only inconsistent formulas (Lemma~\ref{lem:inconsistency-removal}).
Minimality follows from the complete iterative search over increasing formula sizes (see Fig.~\ref{fig:overview}).
\end{proof}


\section{Evaluation}
\label{sec:implementation}

In this section, we evaluate the capability of our learning algorithm to infer concise \PLTLB{} formulas from samples of DTMCs. To this end, we developed a prototype tool, \tool{}\footnote{\url{https://github.com/rajarshi008/PriTL}}, implemented in Python3, which integrates the three procedures, GBE, PTS and BSC, of the learning algorithm.
For heuristics in GBE, we rely on \LTL{} simplification and satisfaction features from the SPOT library~\cite{duret.22.cav}.
For \LTL{} model checking of DTMCs in PTS, we rely on the PRISM tool~\cite{DBLP:conf/cav/KwiatkowskaNP11}, using its (default) hybrid model checking engine.

To the best of our knowledge, no existing tool can directly learn arbitrary temporal specifications from DTMCs. To evaluate \tool{}'s ability to learn concise and distinguishing formulas, we tested it on strategies generated within a stochastic environment and on various variants of a probabilistic model.
For all experiments, we set the maximum depth $D=2$, tolerance $\delta=0.05$, and a Boolean combination limit $L=10$. 
If \tool{} identifies multiple minimal formulas, it returns the one with the highest probability difference, $p_\varphi-n_\varphi$.
In case of a tie, \tool{} returns all valid formulas.

We conducted all experiments on a MacBook Pro M3 (macOS 14.6.1) with 18~GB RAM. We provide additional implementation details in Appendix~\ref{app:implementation}.

\subsubsection{Learning from strategies in stochastic environment}
For this experiment, we focus on strategy DTMCs generated via non-Markovian reinforcement learning algorithms. 
Specifically, we utilize different Q-learning algorithms proposed by~\cite{DBLP:conf/ijcai/ShaoK23} that are capable of generating optimal strategies for LTL tasks.
We present details of the strategy training process in Appendix~\ref{app:strategy-exp}.

As the underlying MDP, we select the widely used OpenAI Gym frozen lake environment~\cite{DBLP:journals/corr/BrockmanCPSSTZ16}, employing the same layout as in~\cite{DBLP:conf/ijcai/ShaoK23}. 
This environment is an 8$\times$8 gridworld,  where an agent navigates a slippery frozen lake, introducing stochasticity: with a 1/3 probability, the agent moves in the intended direction, and with a 1/3 probability, it deviates sideways.
The environment includes three key features: two campsites, $a$ and $b$, and several holes $h$.

\begin{table*}[h]
    \centering
    \caption{Summary of the learning from strategy DTMCs for correct and incorrect tasks on Frozen Lake.}
    \label{tab:correct-incorrect}
    \setlength{\tabcolsep}{3pt}
    \small
    \begin{tabular}{cccccc}
    \toprule
    Correct task for $P$ & Incorrect task(s) for $N$& Learned \PLTLB{} Formula & State space of $\sample$ & \LTL{} Searched & Time (sec)\\
    \midrule
    $\lF(a) \land \lG(\neg h)$ & $\lF(a)$ & $\lP_{>0.76}[\lG(\neg h)]$ & $2.5\cdot 10^3$ & $24/24$ & $1.77$  \\
    $\lF(a) \land \lG(\neg h)$ & $\lF(a)$, $\lG(\neg h)$  & $\lP_{>0.76}[\neg h \lU a]$ & $1.9\cdot 10^3$ & $110/186$ & $3.39$ \\
    $\lF\lG(a) \land \lG(\neg h)$ & $\lF(a) \land \lG(\neg h)$ & $\lP_{>0.49}[\lF\lG(a)]$, $\lP_{>0.49}[\lG\lF(a)]$ & $2.7\cdot 10^3$ & $112/186$ & $4.93$ \\
    $\lG\lF(a) \land \lG\lF(\neg a) \land \lG(\neg h)$ & $\lF\lG(a)\land \lG(\neg h)$ & $\lP_{>0.5}[\lG\lF(\neg a)]$ & $0.9\cdot 10^3$ & $50/186$ & $3.9$ \\
    $\lF(a) \land \lF(b) \land \lG(\neg h)$ & $\lF(a) \land \lG(\neg h)$, $\lF(b) \land \lG(\neg h)$ & $\lP_{>0.5}[\lF(a)]\land \lP_{>0.99}[\lF(b)]$ & $3.2\cdot 10^3$ & $476/7314$ & $24.1$ \\
    \bottomrule
    \end{tabular}
\end{table*}

We evaluate \tool{} on two distinct applications: (i) learning from strategies trained on correct and incorrect LTL tasks, and (ii) learning from optimal and suboptimal strategies for the same LTL task.
For both (i) and (ii), we set the propositions $\prop = \{a,b,h\}$ and formula size bound $K=10$.

For 
application (i), we identify several desirable LTL tasks and designate them as correct tasks. 
As incorrect tasks, we select LTL tasks that are less precise than their correct counterparts. 
For example, the correct task $\lF(a) \land \lG(\neg h)$ requires reaching campsite $a$ while always avoiding holes $h$, whereas the incorrect task $\lF(a)$ specifies a weaker condition of reaching the campsite, which may result in falling into holes.
The first two columns of Table~\ref{tab:correct-incorrect} list the considered correct and incorrect tasks, respectively.
For each correct and incorrect task, we generate 10 positive and 10 negative optimal strategy DTMCs, respectively, using the CF+KC Q-learning algorithm~\cite{DBLP:conf/ijcai/ShaoK23}, known for its fast convergence to optimality. 
The cumulative state space of the samples is of the order of $10^3$ (see fourth column of Table~\ref{tab:correct-incorrect}).

We present the learned \PLTLB{} formulas for each task in Table~\ref{tab:correct-incorrect}. 
For the first two tasks, \tool{} inferred \PLTLB{} formulas with safety properties, $\lG(\neg h)$ and $\neg h \lU a$, which were violated in the negative examples.
In subsequent tasks, \tool{} inferred formulas that indicate the specific requirements missing in the negative examples.
These include repeated reachability $\lG\lF(a)$ instead of simple reachability $\lF(a)$, performing two tasks simultaneously $\lF(a) \land \lF(b)$ instead of just one $\lF(a)$ or $\lF(b)$, etc.
Overall, \tool{} successfully produced concise formulas that explain the differences between strategies trained on different tasks.

For 
application (ii), we identify some more desirable LTL tasks (in Figure~\ref{fig:time_results}) and generate optimal and suboptimal strategies for each.
We extract strategy DTMCs from intermediate episodes of the KC Q-learning algorithm since it has relatively slower convergence to optimality~\cite{DBLP:conf/ijcai/ShaoK23}, thereby often yielding sub-optimal strategies.
We considered a strategy that achieves a high probability (i.e., $p\geq0.95$) as optimal, while one that achieves a lower probability (i.e., $0.5\leq p \leq0.9$) as suboptimal.
Overall, for each LTL task, we collected at least 30 positive DTMCs and 30 negative DTMCs corresponding to optimal and suboptimal strategies, respectively.
We evaluated our algorithm using varying sample sizes $|\sample|$, ranging from 10 to 60 DTMCs per sample, with an equal split between positive and negative examples.

We present the runtime for varying sample sizes in Figure~\ref{fig:time_results}.
For the tasks $\neg h \lU a$ and $\lG\lF(a) \lor \lG\lF(b)$, \tool{} consistently inferred the formulas $\lP_{>0.9}[\lG(\neg h)]$ and $\lP_{>0.93}[\lG\lF(a) \lor \lG\lF(b)]$, respectively, across all samples.
The runtime for these increased linearly with the sample size.

For the task $\lF(a \land \lF(b))$, \tool{} inferred $\lP_{>0.97}[\lF(b)]$ for a sample size of 10 and a more precise formula, $\lP_{>0.94}[\lF(a)] \land \lP_{>0.97}[\lF(b)]$, for larger sizes. 
Similarly, for the task $\lG\lF(a) \land \lF(b)$, \tool{} inferred $\lP_{>0.92}[\lF(b)]$, for smaller samples ($\leq40$), whereas it inferred a more precise formula, $\lP_{>0.92}[\lG\lF(a)] \land \lP_{>0.95}[\lF(b)]$ for larger samples (50 and 60).
Both tasks showed a runtime spike due to the change in the inferred formula, deviating from the linear trend in other tasks.
Moreover, in both cases, the more precise formula was inferred by the BSC procedure.
Overall, \tool{} successfully inferred expected \PLTLB{} formulas, with runtime generally scaling linearly with sample size.
\begin{figure}
    \centering
    \begin{tikzpicture}
        \begin{semilogyaxis}[
            xlabel={Number of DTMCs},
            ylabel={Time (sec)},
            legend style={at={(1.05,0.4)}, anchor=west},
            ymajorgrids=true,
            grid style=dashed,
            width=0.27\textwidth,
            height=0.27\textwidth,
            xtick={10, 20, 30, 40, 50, 60},
            tick style={major tick length=2pt, minor tick length=1pt}, 
            tick label style={font=\small} 
        ]
            \addplot[color=red, mark=*, mark options={opacity=0.5}] coordinates {
                (10,0.98)
                (20,1.8)
                (30,2.7)
                (40,4.06)
                (50,4.66)
                (60,8.95)
            };
            \addlegendentry{$\neg h \lU a$}

            \addplot[color=blue, mark=*, mark options={opacity=0.5}] coordinates {
                (10,0.99)
                (20,13.84)
                (30,21.04)
                (40,28.89)
                (50,35.93)
                (60,44.05)
            };
            \addlegendentry{$\lF(a \land \lF(b))$}

            \addplot[color=green, mark=*, mark options={opacity=0.5}] coordinates {
                (10,0.95)
                (20,1.91)
                (30,2.82)
                (40,4.26)
                (50,206.03)
                (60,246.87)
            };
            \addlegendentry{$\lG\lF(a) \land \lF(b)$}

            \addplot[color=orange, mark=*, mark options={opacity=0.5}] coordinates {
                (10,27.7)
                (20,53.46)
                (30,87.21)
                (40,125.66)
                (50,157.01)
                (60,188.62)
            };
            \addlegendentry{$\lG\lF(a) \lor \lG\lF(b)$}    

        \end{semilogyaxis}
        \node[anchor=south west] at (rel axis cs:1.42,0.82) {LTL Tasks};
    \end{tikzpicture}
    \caption{Runtime comparison for strategies generated from varying formulas and varying sample sizes.}
    \label{fig:time_results}
\end{figure}
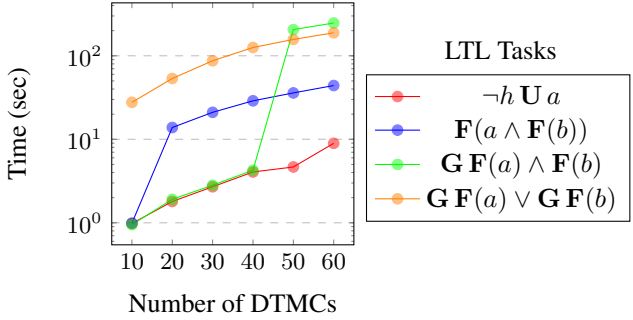


We also briefly discuss how different parts of \tool{} contribute to the learning process. 
PTS dominates the running time, while GBE and BSC take negligible time. 
For example, inferring $\lP_{>0.5}[\lF(a)] \land \lP_{>0.99}[\lF(b)]$ (from Table~\ref{tab:correct-incorrect}) took 24.05 seconds for PTS, and 0.05 and 0.01 seconds for GBE and BSC, respectively. 
The heuristics enhance efficiency by reducing the search space, particularly for larger formulas.  
The fifth column of Table~\ref{tab:correct-incorrect} compares the considered formula for PTS with the total space up to the size of the learned formula.
For the same example, the search was reduced to just 7\% of the possible \LTL{} space via heuristics.

\subsubsection{Learning from variants of probabilistic models}
In this experiment, we compare two implementations of the probabilistic secret-sharing protocol EGL~\cite{DBLP:journals/cacm/EvenGL85,NS06}, where two parties, $A$ and $B$, share $2P$ secrets (2-length bit-vectors) over several rounds.
The two implementations we consider, EGL1\textsubscript{$P$} and EGL2\textsubscript{$P$}, are parametrized by the number of secrets $P=1,\dots,7$.
The key difference in the variants is in the sharing order: in EGL1\textsubscript{$P$}, each party sequentially shares all their $i^{\text{th}}$ bits, while in EGL2\textsubscript{$P$}, they alternately share half of their $i^{\text{th}}$ bits.
We provide further details on the implementation in Appendix~\ref{app:egl-exp}.

To apply \tool{}, we treat EGL1\textsubscript{$P$} as positive, EGL2\textsubscript{$P$} as negative, $K=6$ and $\prop=\{\knowA,\knowB\}$, where $\knowA$ ($\knowB$ resp.) represents A (B resp.) knows of B's (A's resp.) secrets.

For $P=1,2,3,4,5$, \tool{} inferred the formula $\lP_{>p}[\neg\knowA \lU \knowB]$ with progressively increasing thresholds $p=0.88, 0.90, 0.94, 0.97$ with running times 0.37, 0.38, 0.68, 1.31 seconds, respectively.
The inferred formula indicates that, in EGL1\textsubscript{$P$}, the probability that $B$ knows $A$'s secret before $A$ knows $B$'s is higher as compared to EGL2\textsubscript{$P$}.

For $P>5$, however, \tool{} did not infer any formula.
This indicates that no \PLTL{} formula with parameters $K\leq 6$, $d\leq 2$ and $\delta>0.05$ distinguishes the variants when a higher number of secrets are shared, based on our exhaustive search (Theorem~\ref{thm:guarantees}).
Overall, \tool{} could identify key probabilistic temporal differences between variants of probabilistic models, or confirm their absence.

\section{Conclusion}

We focused on the automatic learning of temporal behaviour in stochastic systems. 
Specifically, we considered the passive learning problem of learning concise probabilistic \LTL{} (\PLTL{}) formulas that distinguish between positive and negative Markov chains. 
Our novel learning algorithm combines grammar-based enumeration with probabilistic model checking, enhanced by search heuristics. 
We demonstrated the ability of our approach in inferring temporal specifications in both reinforcement learning and modelling applications.

In the future, we plan to integrate our algorithm into other learning frameworks, such as active learning~\cite{CamachoM19} and learning from positive examples~\cite{ltl-from-positive-only}. Moreover, we aim to extend our approach to multi-agent systems~\cite{DBLP:conf/ijcai/BoggessK023}.

\subsubsection{Acknowledgments.}
This project received funding from the ERC under the European Union’s
Horizon 2020 research and innovation programme (grant agreement No.834115, FUN2MODEL).
The work was done while YP visited the University of Oxford for an internship.

\bibliographystyle{named}
\bibliography{bib}

\begin{thebibliography}{}

\bibitem[\protect\citeauthoryear{Baier and Katoen}{2008}]{model-checking-book}
Christel Baier and Joost{-}Pieter Katoen.
\newblock {\em Principles of model checking}.
\newblock {MIT} Press, 2008.

\bibitem[\protect\citeauthoryear{Bartocci \bgroup \em et al.\egroup }{2014}]{DBLP:conf/formats/BartocciBS14}
Ezio Bartocci, Luca Bortolussi, and Guido Sanguinetti.
\newblock Data-driven statistical learning of temporal logic properties.
\newblock In {\em {FORMATS}}, volume 8711 of {\em Lecture Notes in Computer Science}, pages 23--37. Springer, 2014.

\bibitem[\protect\citeauthoryear{Bartocci \bgroup \em et al.\egroup }{2022}]{DBLP:journals/iandc/BartocciMNN22}
Ezio Bartocci, Cristinel Mateis, Eleonora Nesterini, and Dejan Nickovic.
\newblock Survey on mining signal temporal logic specifications.
\newblock {\em Inf. Comput.}, 289(Part):104957, 2022.

\bibitem[\protect\citeauthoryear{Bj{\o}rner and Havelund}{2014}]{BjornerH14}
Dines Bj{\o}rner and Klaus Havelund.
\newblock 40 years of formal methods - some obstacles and some possibilities?
\newblock In {\em {FM}}, volume 8442 of {\em Lecture Notes in Computer Science}, pages 42--61. Springer, 2014.

\bibitem[\protect\citeauthoryear{Boggess \bgroup \em et al.\egroup }{2023}]{DBLP:conf/ijcai/BoggessK023}
Kayla Boggess, Sarit Kraus, and Lu~Feng.
\newblock Explainable multi-agent reinforcement learning for temporal queries.
\newblock In {\em Proceedings of the Thirty-Second International Joint Conference on Artificial Intelligence, {IJCAI} 2023, 19th-25th August 2023, Macao, SAR, China}, pages 55--63. ijcai.org, 2023.

\bibitem[\protect\citeauthoryear{Bombara \bgroup \em et al.\egroup }{2016}]{dtmethod}
Giuseppe Bombara, Cristian~Ioan Vasile, Francisco Penedo, Hirotoshi Yasuoka, and Calin Belta.
\newblock A decision tree approach to data classification using signal temporal logic.
\newblock In {\em Proceedings of the 19th International Conference on Hybrid Systems: Computation and Control}, HSCC '16, page 1–10, New York, NY, USA, 2016. Association for Computing Machinery.

\bibitem[\protect\citeauthoryear{Bordais \bgroup \em et al.\egroup }{2024}]{DBLP:conf/fm/BordaisNR24}
Benjamin Bordais, Daniel Neider, and Rajarshi Roy.
\newblock Learning branching-time properties in {CTL} and {ATL} via constraint solving.
\newblock In Andr{\'{e}} Platzer, Kristin~Yvonne Rozier, Matteo Pradella, and Matteo Rossi, editors, {\em Formal Methods - 26th International Symposium, {FM} 2024, Milan, Italy, September 9-13, 2024, Proceedings, Part {I}}, volume 14933 of {\em Lecture Notes in Computer Science}, pages 304--323. Springer, 2024.

\bibitem[\protect\citeauthoryear{Bozkurt \bgroup \em et al.\egroup }{2020}]{DBLP:conf/icra/Bozkurt0ZP20}
Alper~Kamil Bozkurt, Yu~Wang, Michael~M. Zavlanos, and Miroslav Pajic.
\newblock Control synthesis from linear temporal logic specifications using model-free reinforcement learning.
\newblock In {\em 2020 {IEEE} International Conference on Robotics and Automation, {ICRA} 2020, Paris, France, May 31 - August 31, 2020}, pages 10349--10355. {IEEE}, 2020.

\bibitem[\protect\citeauthoryear{Brockman \bgroup \em et al.\egroup }{2016}]{DBLP:journals/corr/BrockmanCPSSTZ16}
Greg Brockman, Vicki Cheung, Ludwig Pettersson, Jonas Schneider, John Schulman, Jie Tang, and Wojciech Zaremba.
\newblock Openai gym.
\newblock {\em CoRR}, abs/1606.01540, 2016.

\bibitem[\protect\citeauthoryear{Brockman}{2016}]{brockman2016openai}
G~Brockman.
\newblock Openai gym.
\newblock {\em arXiv preprint arXiv:1606.01540}, 2016.

\bibitem[\protect\citeauthoryear{Camacho and McIlraith}{2019}]{CamachoM19}
Alberto Camacho and Sheila~A. McIlraith.
\newblock Learning interpretable models expressed in linear temporal logic.
\newblock In {\em {ICAPS}}, pages 621--630. {AAAI} Press, 2019.

\bibitem[\protect\citeauthoryear{Camacho \bgroup \em et al.\egroup }{2019}]{DBLP:conf/ijcai/CamachoIKVM19}
Alberto Camacho, Rodrigo~Toro Icarte, Toryn~Q. Klassen, Richard~Anthony Valenzano, and Sheila~A. McIlraith.
\newblock {LTL} and beyond: Formal languages for reward function specification in reinforcement learning.
\newblock In Sarit Kraus, editor, {\em Proceedings of the Twenty-Eighth International Joint Conference on Artificial Intelligence, {IJCAI} 2019, Macao, China, August 10-16, 2019}, pages 6065--6073. ijcai.org, 2019.

\bibitem[\protect\citeauthoryear{Chiariello}{2024}]{DBLP:conf/time/Chiariello24}
Francesco Chiariello.
\newblock Learning temporal properties from event logs via sequential analysis.
\newblock In {\em {TIME}}, volume 318 of {\em LIPIcs}, pages 14:1--14:14. Schloss Dagstuhl - Leibniz-Zentrum f{\"{u}}r Informatik, 2024.

\bibitem[\protect\citeauthoryear{Danesh \bgroup \em et al.\egroup }{2021}]{DBLP:conf/icml/DaneshKFK21}
Mohamad~H. Danesh, Anurag Koul, Alan Fern, and Saeed Khorram.
\newblock Re-understanding finite-state representations of recurrent policy networks.
\newblock In {\em {ICML}}, volume 139 of {\em Proceedings of Machine Learning Research}, pages 2388--2397. {PMLR}, 2021.

\bibitem[\protect\citeauthoryear{Duret-Lutz \bgroup \em et al.\egroup }{2022}]{duret.22.cav}
Alexandre Duret-Lutz, Etienne Renault, Maximilien Colange, Florian Renkin, Alexandre~Gbaguidi Aisse, Philipp Schlehuber-Caissier, Thomas Medioni, Antoine Martin, J{\'e}r{\^o}me Dubois, Cl{\'e}ment Gillard, and Henrich Lauko.
\newblock From {S}pot 2.0 to {S}pot 2.10: What's new?
\newblock In {\em Proceedings of the 34th International Conference on Computer Aided Verification (CAV'22)}, volume 13372 of {\em Lecture Notes in Computer Science}, pages 174--187. Springer, August 2022.

\bibitem[\protect\citeauthoryear{Duret-Lutz}{2024}]{spotrewriterules}
Alexandre Duret-Lutz.
\newblock Spot’s temporal logic formulas, 2024.
\newblock Accessed: 02-01-2025.

\bibitem[\protect\citeauthoryear{Dwyer \bgroup \em et al.\egroup }{1998}]{DBLP:conf/fmsp/DwyerAC98}
Matthew~B. Dwyer, George~S. Avrunin, and James~C. Corbett.
\newblock Property specification patterns for finite-state verification.
\newblock In {\em {FMSP}}, pages 7--15. {ACM}, 1998.

\bibitem[\protect\citeauthoryear{Even \bgroup \em et al.\egroup }{1985}]{DBLP:journals/cacm/EvenGL85}
Shimon Even, Oded Goldreich, and Abraham Lempel.
\newblock A randomized protocol for signing contracts.
\newblock {\em Commun. {ACM}}, 28(6):637--647, 1985.

\bibitem[\protect\citeauthoryear{Flajolet and Sedgewick}{2009}]{DBLP:books/daglib/0023751}
Philippe Flajolet and Robert Sedgewick.
\newblock {\em Analytic Combinatorics}.
\newblock Cambridge University Press, 2009.

\bibitem[\protect\citeauthoryear{Fuggitti and Chakraborti}{2023}]{DBLP:conf/aaai/FuggittiC23}
Francesco Fuggitti and Tathagata Chakraborti.
\newblock {NL2LTL} - a python package for converting natural language {(NL)} instructions to linear temporal logic {(LTL)} formulas.
\newblock In {\em {AAAI}}, pages 16428--16430. {AAAI} Press, 2023.

\bibitem[\protect\citeauthoryear{Gold}{1978}]{DBLP:journals/iandc/Gold78}
E.~Mark Gold.
\newblock Complexity of automaton identification from given data.
\newblock {\em Inf. Control.}, 37(3):302--320, 1978.

\bibitem[\protect\citeauthoryear{Hasanbeig \bgroup \em et al.\egroup }{2019}]{DBLP:conf/cdc/HasanbeigKAKPL19}
Mohammadhosein Hasanbeig, Yiannis Kantaros, Alessandro Abate, Daniel Kroening, George~J. Pappas, and Insup Lee.
\newblock Reinforcement learning for temporal logic control synthesis with probabilistic satisfaction guarantees.
\newblock In {\em 58th {IEEE} Conference on Decision and Control, {CDC} 2019, Nice, France, December 11-13, 2019}, pages 5338--5343. {IEEE}, 2019.

\bibitem[\protect\citeauthoryear{Kemeny \bgroup \em et al.\egroup }{1976}]{KSK76}
J.~Kemeny, J.~Snell, and A.~Knapp.
\newblock {\em Denumerable {M}arkov Chains}.
\newblock Springer-Verlag, 2nd edition, 1976.

\bibitem[\protect\citeauthoryear{Kwiatkowska \bgroup \em et al.\egroup }{2011}]{DBLP:conf/cav/KwiatkowskaNP11}
Marta~Z. Kwiatkowska, Gethin Norman, and David Parker.
\newblock {PRISM} 4.0: Verification of probabilistic real-time systems.
\newblock In Ganesh Gopalakrishnan and Shaz Qadeer, editors, {\em Computer Aided Verification - 23rd International Conference, {CAV} 2011, Snowbird, UT, USA, July 14-20, 2011. Proceedings}, volume 6806 of {\em Lecture Notes in Computer Science}, pages 585--591. Springer, 2011.

\bibitem[\protect\citeauthoryear{Li \bgroup \em et al.\egroup }{2017}]{DBLP:conf/iros/LiVB17}
Xiao Li, Cristian~Ioan Vasile, and Calin Belta.
\newblock Reinforcement learning with temporal logic rewards.
\newblock In {\em 2017 {IEEE/RSJ} International Conference on Intelligent Robots and Systems, {IROS} 2017, Vancouver, BC, Canada, September 24-28, 2017}, pages 3834--3839. {IEEE}, 2017.

\bibitem[\protect\citeauthoryear{Luo \bgroup \em et al.\egroup }{2022}]{DBLP:conf/aaai/LuoLDWPZ22}
Weilin Luo, Pingjia Liang, Jianfeng Du, Hai Wan, Bo~Peng, and Delong Zhang.
\newblock Bridging ltlf inference to {GNN} inference for learning ltlf formulae.
\newblock In {\em {AAAI}}, pages 9849--9857. {AAAI} Press, 2022.

\bibitem[\protect\citeauthoryear{Milani \bgroup \em et al.\egroup }{2024}]{DBLP:journals/csur/MilaniTVF24}
Stephanie Milani, Nicholay Topin, Manuela Veloso, and Fei Fang.
\newblock Explainable reinforcement learning: {A} survey and comparative review.
\newblock {\em {ACM} Comput. Surv.}, 56(7):168:1--168:36, 2024.

\bibitem[\protect\citeauthoryear{Mohammadinejad \bgroup \em et al.\egroup }{2020}]{MohammadinejadD20}
Sara Mohammadinejad, Jyotirmoy~V. Deshmukh, Aniruddh~Gopinath Puranic, Marcell Vazquez{-}Chanlatte, and Alexandre Donz{\'{e}}.
\newblock Interpretable classification of time-series data using efficient enumerative techniques.
\newblock In {\em {HSCC} '20: 23rd {ACM} International Conference on Hybrid Systems: Computation and Control, Sydney, New South Wales, Australia, April 21-24, 2020}, pages 9:1--9:10. {ACM}, 2020.

\bibitem[\protect\citeauthoryear{Neider and Gavran}{2018}]{flie}
Daniel Neider and Ivan Gavran.
\newblock Learning linear temporal properties.
\newblock In Nikolaj~S. Bj{\o}rner and Arie Gurfinkel, editors, {\em 2018 Formal Methods in Computer Aided Design, {FMCAD} 2018, Austin, TX, USA, October 30 - November 2, 2018}, pages 1--10. {IEEE}, 2018.

\bibitem[\protect\citeauthoryear{Neider and Roy}{2025}]{DBLP:conf/birthday/Neider025}
Daniel Neider and Rajarshi Roy.
\newblock What is formal verification without specifications? {A} survey on mining {LTL} specifications.
\newblock In {\em Principles of Verification {(3)}}, volume 15262 of {\em Lecture Notes in Computer Science}, pages 109--125. Springer, 2025.

\bibitem[\protect\citeauthoryear{Nenzi \bgroup \em et al.\egroup }{2018}]{NenziSBB18}
Laura Nenzi, Simone Silvetti, Ezio Bartocci, and Luca Bortolussi.
\newblock A robust genetic algorithm for learning temporal specifications from data.
\newblock In {\em {QEST}}, volume 11024 of {\em Lecture Notes in Computer Science}, pages 323--338. Springer, 2018.

\bibitem[\protect\citeauthoryear{Norman and Shmatikov}{2006}]{NS06}
G.~Norman and V.~Shmatikov.
\newblock Analysis of probabilistic contract signing.
\newblock {\em Journal of Computer Security}, 14(6):561--589, 2006.

\bibitem[\protect\citeauthoryear{Pnueli}{1977}]{DBLP:conf/focs/Pnueli77}
Amir Pnueli.
\newblock The temporal logic of programs.
\newblock In {\em 18th Annual Symposium on Foundations of Computer Science, Providence, Rhode Island, USA, 31 October - 1 November 1977}, pages 46--57. {IEEE} Computer Society, 1977.

\bibitem[\protect\citeauthoryear{Pommellet \bgroup \em et al.\egroup }{2024}]{DBLP:conf/ijcar/PommelletSS24}
Adrien Pommellet, Daniel Stan, and Simon Scatton.
\newblock Sat-based learning of computation tree logic.
\newblock In Christoph Benzm{\"{u}}ller, Marijn J.~H. Heule, and Renate~A. Schmidt, editors, {\em Automated Reasoning - 12th International Joint Conference, {IJCAR} 2024, Nancy, France, July 3-6, 2024, Proceedings, Part {I}}, volume 14739 of {\em Lecture Notes in Computer Science}, pages 366--385. Springer, 2024.

\bibitem[\protect\citeauthoryear{Raha \bgroup \em et al.\egroup }{2022}]{scarlet}
Ritam Raha, Rajarshi Roy, Nathana{\"e}l Fijalkow, and Daniel Neider.
\newblock Scalable anytime algorithms for learning fragments of linear temporal logic.
\newblock In Dana Fisman and Grigore Rosu, editors, {\em Tools and Algorithms for the Construction and Analysis of Systems}, pages 263--280, Cham, 2022. Springer International Publishing.

\bibitem[\protect\citeauthoryear{Roy \bgroup \em et al.\egroup }{2020}]{DBLP:conf/ijcai/0002FN20}
Rajarshi Roy, Dana Fisman, and Daniel Neider.
\newblock Learning interpretable models in the property specification language.
\newblock In {\em {IJCAI}}, pages 2213--2219. ijcai.org, 2020.

\bibitem[\protect\citeauthoryear{Roy \bgroup \em et al.\egroup }{2022}]{ltl-from-positive-only}
Rajarshi Roy, Jean{-}Rapha{\"{e}}l Gaglione, Nasim Baharisangari, Daniel Neider, Zhe Xu, and Ufuk Topcu.
\newblock Learning interpretable temporal properties from positive examples only.
\newblock {\em CoRR}, abs/2209.02650, 2022.

\bibitem[\protect\citeauthoryear{Rozier}{2016}]{Rozier16}
Kristin~Yvonne Rozier.
\newblock Specification: The biggest bottleneck in formal methods and autonomy.
\newblock In {\em {VSTTE}}, volume 9971 of {\em Lecture Notes in Computer Science}, pages 8--26, 2016.

\bibitem[\protect\citeauthoryear{Shao and Kwiatkowska}{2023}]{DBLP:conf/ijcai/ShaoK23}
Daqian Shao and Marta Kwiatkowska.
\newblock Sample efficient model-free reinforcement learning from {LTL} specifications with optimality guarantees.
\newblock In {\em Proceedings of the Thirty-Second International Joint Conference on Artificial Intelligence, {IJCAI} 2023, 19th-25th August 2023, Macao, SAR, China}, pages 4180--4189. ijcai.org, 2023.

\bibitem[\protect\citeauthoryear{Topin and Veloso}{2019}]{DBLP:conf/aaai/TopinV19}
Nicholay Topin and Manuela Veloso.
\newblock Generation of policy-level explanations for reinforcement learning.
\newblock In {\em {AAAI}}, pages 2514--2521. {AAAI} Press, 2019.

\bibitem[\protect\citeauthoryear{Valizadeh \bgroup \em et al.\egroup }{2024}]{LTL-GPU}
Mojtaba Valizadeh, Nathana{\"e}l Fijalkow, and Martin Berger.
\newblock Ltl learning on gpus.
\newblock In Arie Gurfinkel and Vijay Ganesh, editors, {\em Computer Aided Verification}, pages 209--231, Cham, 2024. Springer Nature Switzerland.

\bibitem[\protect\citeauthoryear{Vardi}{1985}]{Var85}
M.~Vardi.
\newblock Automatic verification of probabilistic concurrent finite state programs.
\newblock In {\em Proc. 26th Annual Symposium on Foundations of Computer Science (FOCS'85)}, pages 327--338. IEEE Computer Society Press, 1985.

\bibitem[\protect\citeauthoryear{Wan \bgroup \em et al.\egroup }{2024}]{DBLP:conf/aaai/WanLDLYP24}
Hai Wan, Pingjia Liang, Jianfeng Du, Weilin Luo, Rongzhen Ye, and Bo~Peng.
\newblock End-to-end learning of ltlf formulae by faithful ltlf encoding.
\newblock In {\em {AAAI}}, pages 9071--9079. {AAAI} Press, 2024.

\end{thebibliography}

\appendix
\section{Additional Proofs}
\label{app:proofs}
\subsubsection{Proof of Lemma~\ref{lem:gbe-completeness}}
    We prove the lemma by induction on the size $N'$.

For the \emph{base case} $N'=1$,
$\formulalist_1^0 = \literals$, $\formulalist_1^d=\emptyset$ for all $d \leq D$, which constitute the only possible formulas for this size.

For the inductive step, assume the claim holds for $\mathcal{L} := \bigcup_{n \leq N'} \formulalist_n$.
We now show that $\formulalist_{N'+1}$ computed by GBE includes all formulas of size $N'+1$ and depth $\leq D$ that are semantically distinct from those in $\mathcal{L}$.

To this end, GBE systematically constructs all formulas of size $N'+1$ by applying LTL operators to formulas in $\mathcal{L}$ and discards a formula only if it is identified as redundant by the corresponding heuristic.

Specifically, GBE applies unary operators (Line 4) as $\psi := \circ \varphi$ for $\circ \in \{\lF, \lG, \lX\}$ and $\varphi \in \mathcal{L}$, and discards $\psi$ only if temporal simplification applies, i.e., there exists $\psi' \in \mathcal{L}$ such that $\psi \equiv \psi'$.
It then applies binary operators (Lines 9, 13) as $\psi := \varphi \circ \varphi'$ for $\circ \in \{\lU, \lor, \land\}$ and $\varphi, \varphi' \in \mathcal{L}$, discarding $\psi$ only if Boolean simplification applies, i.e., $\psi \equiv \varphi$ or $\psi \equiv \lfalse$.
Thus, $\formulalist_{N'+1}$ contains all formulas of size $N'+1$ and depth $\leq D$ that are not semantically equivalent to any formula in $\mathcal{L}$, completing the inductive step.

\subsubsection{Proof of Lemma~\ref{lem:inconsistency-removal}}
    \begin{proof}
    To prove this lemma, we first note a general property of LTL for probabilistic models:
    for some $s\in S$, $\probvector^{\markov,\psi}(s)=0$ if and only if $\PM^{\markov}_s(\{\pi\in \paths^{\markov}(s)~|~\pi\models\psi\})=0$ if and only if $\pi\not\models\psi$ for any $\pi\in \Pi^\markov(s)$.
    
    Based on this general property, $\probvector^{\markov,\varphi}\equiv \mathbf{0}$ if and only if $\pi\not\models\varphi$ for any path $\pi\in\Pi^{\markov}$ starting from any state in $\markov$.
    
    Now, assume $\pi\models\lX(\varphi)$ for some path $\pi\in\Pi^{\markov}$.
    Based on the semantics of $\lX$, {$\pi[1:]\models\varphi$} for some $i\in\nat$, which leads to a contradiction as $\varphi$ does not hold on any path.
    A similar contradiction works for $\psi = \{\lF(\varphi), \lG(\varphi), \lX(\varphi),\varphi' \lU \varphi,\varphi' \land \varphi\}$ and consequently, $\probvector^{\markov,\psi}\equiv \mathbf{0}$. 
    Therefore, $\psi$ is not a useful formula.
    
    For $\psi=\{\varphi\lU\varphi',\varphi\lor\varphi'\}$, we have $\pi\models\psi$ if and only if $\pi\models\varphi'$ for any $\pi\in\Pi^{\markov}$.
    Therefore, $\PM^{\markov}_s(\{\pi\in \paths^{\markov}(s)~|~\pi\models\psi\})=\PM^{\markov}_s(\{\pi\in \paths^{\markov}(s)~|~\pi\models\varphi'\})$ for any $s\in S$ and consequently, $\probvector^{\markov,\psi}\equiv \probvector^{\markov,\varphi'}$.
    Thus, simply $\varphi'$ can be used instead of $\psi$, which is a smaller formula.
\end{proof}

\section{Implementation Details}
\label{app:implementation}

We implemented the learning algorithm \tool{} in Python 3.12. The full source code, along with the datasets used can be found in our Github project\footnote{\url{https://github.com/rajarshi008/PriTL}}. 
The main dependencies of \tool{} include SPOT version 2.12.2\footnote{https://spot.lre.epita.fr/}, PRISM version 4.6, and other standard PyPI packages. 
We modified the PRISM output to facilitate parsing and extraction of probability vectors, as it is invoked via Python; the modified version is available in our fork\footnote{\url{https://github.com/yashpote/prism}}.

To reduce the overhead of repeatedly invoking PRISM, we query PRISM in batches, consisting of all formulas $\formulalist^d_n$ for all sizes $n$ and $d$. 
Additionally, we utilise the NailGun mode, accessed via the ngprism binary, to eliminate the JVM startup time for each PRISM call, resulting in a significant speedup. 
Finally, we use the flags \texttt{--maxiters 1000000} and \texttt{--exportvector} to set a very high iteration limit to ensure convergence and export the probability vectors, respectively.

\section{Experimental Details}
\label{app:experimental}
We discuss the detailed experimental setup for each of the case studies presented in the main paper.

\subsection{Learning from strategies in stochastic environment}
\label{app:strategy-exp}
\subsubsection{Frozen Lake Environment}

We use the same Frozen Lake environment~\cite{brockman2016openai} as employed in~\cite{DBLP:conf/ijcai/ShaoK23}. The environment is depicted in Figure~\ref{fig:frozen_env}. Blue states represent the frozen lake, where the agent has a 1/3 probability of moving in the intended direction and a 1/3 probability of moving sideways (left or right). The white states labelled $h$ are holes, while the states labelled $a$ and $b$ represent lake camps.

\begin{figure}[h]
\centering
\includegraphics[width=0.35\textwidth]{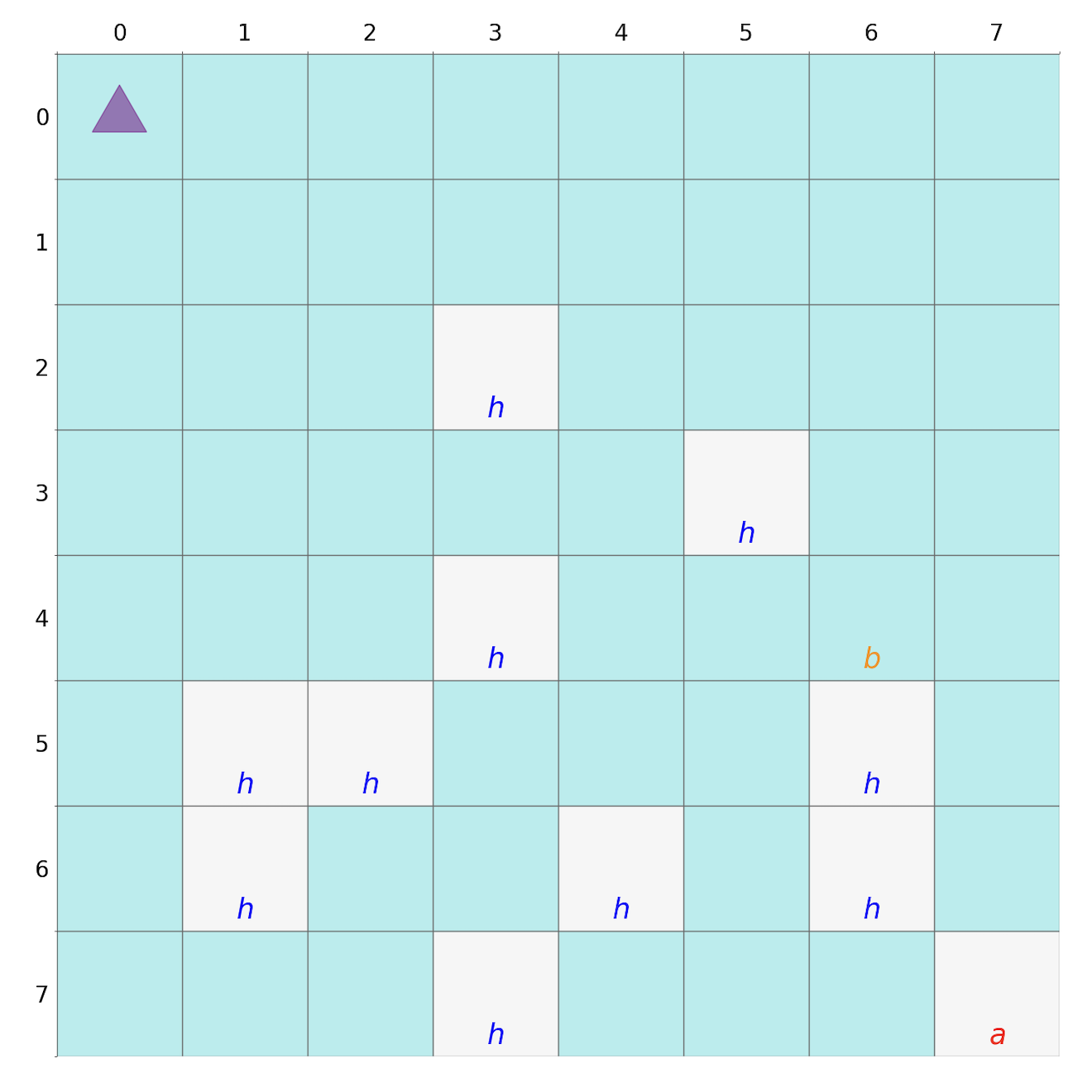}
\caption{The MDP environment for the frozen lake task. Blue represents ice, $h$ are holes, $a$ and $b$ are lake camps, and the purple triangle is the start.}
\label{fig:frozen_env}
\end{figure}

\subsubsection{Q-learning algorithms for LTL tasks}
We utilize the Q-learning algorithms introduced in~\cite{DBLP:conf/ijcai/ShaoK23} to generate strategy DTMCs for the LTL tasks. 
For this, we rely on the authors' GitHub repository\footnote{https://github.com/shaodaqian/rl-from-ltl}, which provides a Python implementation of several algorithms with varying convergence rates. 
The repository also includes a function that automatically constructs strategy DTMCs from the MDP and the strategy in the PRISM language, making it easier for us to obtain the sample of DTMCs.

\subsubsection{Benchmark generation for application (i)}
To generate optimal strategies for the tasks listed in Table~\ref{tab:correct-incorrect}, we use the CF+KC algorithm from~\cite{DBLP:conf/ijcai/ShaoK23}. This algorithm is shown to have a fast convergence rate and is able to produce the most stable strategy DTMCs, which is why we chose it to generate this benchmark set.

We set the number of episodes for CF+KC to be 5000 for all the tasks, employing the default settings from the repository for other parameters.
For each LTL task $\varphi$, we check the probability $v^{\markov,\varphi}_I$ (computed internally by CF+KC) of satisfying the strategy DTMC every 10 episodes. 
We record the first 10 DTMCs, where $v^{\markov,\varphi}_I > 0.99$, as this indicates that the algorithm has converged. 
If two LTL formulas are used to generate negative strategies, we save 5 DTMCs for each formula to ensure a balanced representation.

On this dataset, we run the learning algorithm with propositions $\prop = \{a,b,h\}$, probabilistic difference $\delta=0.05$, and maximum Boolean combination $L=10$.

\subsubsection{Benchmark generation for application (ii)}
In this case, we use the CF algorithm from~\cite{DBLP:conf/ijcai/ShaoK23}, because it has less stability than CF+KC, and often produces suboptimal strategies in early episodes.

We set the number of episodes for KC to be 5000 for all the tasks, employing the default settings from the repository for other parameters.
For each LTL task $\varphi$, we check the probability $v^{\markov,\varphi}_I$ (computed internally by KC) of satisfying the strategy DTMC $\markov$ in every episode.
If $v^{\markov,\varphi}_I\geq0.95$, $M$ is recorded as a positive DTMC, while if $0.5\leq v^{\markov,\varphi}_I<0.9$, $M$ is recorded as a negative DTMC.
We stop the algorithm as soon as we have 30 positive and 30 negative DTMCs for each LTL formula.
The rationale for collecting more DTMCs compared to the previous experiment is that suboptimal strategies may exhibit considerable variance in their temporal behaviour due to the inherent randomness in Q-learning.

\subsection{Learning from EGL protocols}
\label{app:egl-exp}
We elaborate on the description of EGL protocols. 
We use the implementation of the protocols from the case studies presented in PRISM website\footnote{https://www.prismmodelchecker.org/tutorial/egl.php}.
We briefly describe the setting of the protocol. 
\begin{itemize}
\item $A$ and $B$ holds $2P$ secrets $a_1,\dots,a_{2P}$, and $b_1,\dots,b_{2P}$, respectively.
\item all the secrets $a_i$, $b_i$ are a binary string of length 2.
\item the secrets are partitioned into pairs: e.g. $\{ (a_i, a_{P+i})~|~i=1,\dots,P \}$
\item we say A is committed if B knows one of A’s pairs, which we denote as $\knowB$; 
similarly, we say B is committed if A knows one of B’s pairs, which we denote as $\knowA$.
\end{itemize}

In the first part of the secret sharing process, parties probabilistically share some bits of secrets using the \emph{1-out-of-2 oblivious transfer protocol}. Formally, this protocol $OT(S,R,x,y)$ is defined as follows:
\begin{itemize}
\item the sender $S$ sends $x$ and $y$ to receiver $R$
\item $R$ receives $x$ with probability 0.5 otherwise receives $y$
\item $S$ does not know which one $R$ receives
if $S$ cheats then $R$ can detect this with probability 0.5.
\end{itemize}

After this, the parties exchange the remaining bits of their secrets in a specific order, which differs between EGL1 and EGL2. 
Both protocols are described in Algorithm~\ref{alg:egl1} and Algorithm~\ref{alg:egl2}. 
The key difference is in how the bits are shared.
In EGL1, party A shares all of its first bits with party B, and then party B shares all of its first bits with party A. This process is repeated for the second bits. In EGL2, however, party A shares half of its first bits with party B, then party B shares their half with party A, and this continues iteratively.

The propositions used for the learning algorithm are as follows:
\begin{align*}
\knowB &= (a_0 \land a_P) \lor \dots \lor (a_{P+1} \land a_{2P}) \\
\knowA &= (b_0 \land b_P) \lor \dots \lor (b_{P+1} \land b_{2P}) 
\end{align*}
where $\knowB$ denotes that $B$ knows at least one of the pairs of $A$'s secret, meaning $A$ is committed.
Similarly, $\knowA$ denotes that $A$ knows at least one of the pairs of $B$'s secret, meaning $B$ is committed.

\begin{algorithm}[t]
    \caption{EGL1}\label{alg:egl1}
    \begin{algorithmic}[1]
        \FOR{$i=1,\dots,n$}
            \STATE $OT(A,B,a_i,a_{P+i})$
            \STATE $OT(B,A,b_i,b_{P+i})$
        \ENDFOR
        \FOR{$i=1,2$}
        \STATE \textbf{for} $j=1,\dots,2P$, A transmits bit $i$ of secret $a_j$ to B
        \STATE \textbf{for} $j=1,\dots,2P$, B transmits bit $i$ of secret $b_j$ to A
        \ENDFOR
    \end{algorithmic}
\end{algorithm}

\begin{algorithm}[t]
    \caption{EGL2}\label{alg:egl2}
    \begin{algorithmic}[1]
        \FOR{$i=1,\dots,n$}
            \STATE $OT(A,B,a_i,a_{P+i})$
            \STATE $OT(B,A,b_i,b_{P+i})$
        \ENDFOR
        \FOR{$i=1,2$}
        \STATE \textbf{for} $j=1,\dots,P$, A transmits bit $i$ of secret $a_j$ to B
        \STATE \textbf{for} $j=1,\dots,P$, B transmits bit $i$ of secret $b_j$ to A
        \STATE \textbf{for} $j=P+1,\dots,2P$, A transmits bit $i$ of secret $a_j$ to B
        \STATE \textbf{for} $j=P+1,\dots,2P$, B transmits bit $i$ of secret $b_j$ to A
        \ENDFOR
    \end{algorithmic}
\end{algorithm}
\end{document}